\DeclareMathOperator*{\argmax}{argmax}
\newtheorem{definition}{Definition}
\newcommand{\ut}{\boldsymbol{u}}
\newcommand{\E}{\mathbb{E}}
\newcommand{\R}{\mathbb{R}}
\newcommand{\cC}{\mathcal{C}}
\newcommand{\cL}{\mathcal{L}}
\newcommand{\cR}{\mathcal{R}}
\newcommand{\cU}{\mathcal{U}}
\newcommand{\IMM}{{\sf IMM}}
\newcommand{\FIMM}{{\sf FIMM}}
\newcommand{\RRGen}{{\sf RR-Generate}}
\newcommand{\alg}[1]{\textnormal{\textsf{#1}}}
\newcommand{\newalg}[1]{{#1}}
\newcommand{\compilehidecomments}{false}
	\newcommand{\wei}[1]{}
	\newcommand{\xiaobin}[1]{}
	\newcommand{\lichao}[1]{}
	\newcommand{\wei}[1]{{\color{blue!50!black}  [\text{Wei:} #1]}}
	\newcommand{\xiaobin}[1]{{\color{red!70!black} [\text{Xiaobin:} #1]}}
	\newcommand{\lichao}[1]{{\color{green!90!black} [\text{Lichao:} #1]}}
\title{Scalable Fair Influence Maximization}
\author{%
  Xiaobin Rui
    \\
  China University of Mining and Technology\\
  Xuzhou, Jiangsu, China \\
  \texttt{ruixiaobin@cumt.edu.cn} \\
  \And
  Zhixiao Wang\thanks{Corresponding author} \\
  China University of Mining and Technology\\
  Xuzhou, Jiangsu, China \\
  \texttt{zhxwang@cumt.edu.cn} \\
  \And
  Jiayu Zhao \\
  China University of Mining and Technology\\
  Xuzhou, Jiangsu, China \\
  \texttt{zhaojy@cumt.edu.cn} \\ 
  \And
  Lichao Sun \\
  Lehigh University \\
  Bethlehem, PA, USA  \\
  \texttt{lis221@lehigh.edu} \\
  \AND
  Wei Chen\footnotemark[1] \\
  Microsoft Research Asia \\
  Beijing, China,  \\
  \texttt{weic@microsoft.com} \\
}
\begin{document}

\maketitle

\begin{abstract}
Given a graph $G$, a community structure $\cC$, and a budget $k$, the fair influence maximization problem aims to select a seed set $S$ ($|S|\leq k$) that maximizes the influence spread while narrowing the influence gap between different communities. 
While various fairness notions exist, the welfare fairness notion, which balances fairness level and influence spread, has shown promising effectiveness. 
However, the lack of efficient algorithms for optimizing the welfare fairness objective function restricts its application to small-scale networks with only a few hundred nodes.
In this paper, we adopt the objective function of welfare fairness to maximize the exponentially weighted summation over the influenced fraction of all communities.
We first introduce an unbiased estimator for the fractional power of the arithmetic mean.
Then, by adapting the reverse influence sampling (RIS) approach, we convert the optimization problem to a weighted maximum coverage problem.
We also analyze the number of reverse reachable sets needed to approximate the fair influence at a high probability.
Further, we present an efficient algorithm that guarantees $1-1/e - \varepsilon$ approximation.
\end{abstract}


\section{Introduction}


Influence maximization (IM) is a well-studied problem in the field of social network analysis. 
Given a graph $G$ and a positive integer $k$, the problem asks to find a node set $S$ ($|S|\leq k$) which can spread certain information to trigger the largest expected number of remaining nodes.
There have been various IM variants, such as adaptive influence maximization~\cite{PengC19adaptive}, multi-round influence maximization~\cite{Sun18multi}, competitive influence maximization~\cite{Ali22compe}, and time-critical influence maximization~\cite{AliBCMGS22time}.
Influence maximization and these variants have important applications in viral marketing, rumor control, health interventions, etc~\cite{Li18survey}.

Considering the situation that when disseminating public health interventions, for example, suicide/HIV prevention~\cite{Wilder18end} or community preparedness against natural disasters, we can select individuals who act as peer-leaders to spread such information to maximize the outreach following influence maximization.
However, it may lead to discriminatory solutions as individuals from racial minorities or LGBTQ communities may be disproportionately excluded from the benefits of the intervention~\cite{rahmattalabi21}.
Therefore, derived from such significant social scenarios, fairness in influence maximization has become a focus of attention for many researchers~\cite{Tsang19group,Rahmattalabi19,rahmattalabi21,Becker21fair}.

Generally, fair influence maximization aims to improve the influenced fraction inside some communities where the coverage may get unfairly low.
Currently, a universally accepted definition of fair influence maximization remains elusive, and recent work has incorporated fairness into influence maximization by proposing various notions of fairness, such as maximin fairness~\cite{Rahmattalabi19}, diversity constraints~\cite{Tsang19group}, demographic parity~\cite{Zafar17dp}, and welfare fairness~\cite{rahmattalabi21}. 
Among these notions, welfare fairness show several attractive features.
Its objective function is the weighted sum (with community size as the weight) of the fractional power ($\alpha$ fraction) of the
	expected proportion of activated nodes within every community.
The fractional exponent $\alpha$ is the inequality aversion parameter, allowing one to 
	balance between fairness and influence spread, with $\alpha$ tending to $1$ for influence spread and $\alpha$ tending to $0$ for
	fairness.
The objective function enjoys both monotonicity and submodularity, enabling a greedy approach with $1-1/e-\varepsilon$ approximation.

However, even though Rahmattalabi {\em et al.}~\citep{Rahmattalabi19} has given a full analysis of welfare fairness, there is currently no efficient algorithm to optimize its objective function with provable guarantee, which restricts their applications to small-scale networks with only a few hundred nodes.
In this paper, we propose an efficient algorithm for maximizing the welfare fairness based on reverse influence sampling (RIS)~\cite{Borgs2014near, tang14, tang15}. 
The main challenges in adapting the RIS approach to the welfare fairness objective include:
	(a) how to carry out the unbiased estimation of the fractional power of the expected proportion of activated nodes in each community, since simply obtaining an unbiased estimate of the expected proportion and then taking its fractional power is not an unbiased estimator;
	(b) how to integrate the unbiased estimation designed into the RIS framework.
We address both challenges and propose a new scalable fair influence maximization with theoretical guarantees.

Our contributions can be summarized as follows:

\begin{itemize}[leftmargin=0.5cm]
    \item We propose an unbiased estimator for the fractional power of the arithmetic mean by leveraging Taylor expansion techniques.
    The estimator enables us to accurately estimate the fair influence under welfare fairness.
    \item Based on the above unbiased estimator, we adapt the RIS approach to approximate the fair influence with Reverse Reachable (RR) sets and propose the {\FIMM} algorithm that works efficiently while guaranteeing the $(1-1/e-\varepsilon)$-approximation solution. Our theoretical analysis needs to address the
    concentration of the unbiased estimator of the fractional power and thus is much more involved than the standard RIS analysis.
    \item We carry out detailed experimental analysis on five real social networks to study the trade-off between fairness and total influence spread.
    We test different fairness parameters, influence probabilities, seed budget, and community structures to confirm the performance of our proposed algorithms.
\end{itemize}

{\parindent0pt
\paragraph*{\textbf{Related Work}}
Influence maximization (IM) is first studied as an algorithmic problem by Domingos and Richardson~\cite{domingos01,richardson02}.
Kempe {\em et al.}~\cite{kempe03} mathematically formulate IM as a discrete optimization problem and prove it is NP-hard.
They also provide a greedy algorithm with $1-1/e$ approximation based on the submodularity and monotonicity of the problem.
Hence, many works have been proposed to improve the efficiency and scalability of influence maximization algorithms~\cite{Leskovec07, ChenWY09,ChenYZ10,WCW12,JungHC12, Borgs2014near, tang14, tang15}.
Among these methods, the most recent and the state of the art is the reverse influence sampling (RIS) approach~\cite{Borgs2014near,tang14,tang15,Tang_OPIM_2018}, where the \alg{IMM} algorithm~\cite{tang15} is one of the representative algorithms.
The idea of RIS approaches is to generate an adequate number of reverse reachable sets ({\em a.k.a.} RR sets), and then the influence spread can be approximated at a high probability based on these RR sets.
Therefore, the greedy approach can be easily applied by iteratively selecting the node that could bring the maximal marginal gain in terms of influence spread as a seed node.
}

Recent work has incorporated fairness directly into the influence maximization framework by relying on Rawlsian theory~\cite{Rahmattalabi19}, game theoretic principles~\cite{Tsang19group}, and equity-based notion~\cite{Stoica20fair}.
Based on the Rawlsian theory, maximin fairness~\cite{Rahmattalabi19, Tsang19group} aims to maximize the influence fraction of the worst-off community. 
Inspired by the game theoretic notion of core, diversity constraints~\cite{Tsang19group} require that every community obtains an influenced fraction higher than when it receives resources proportional to its size and allocates them internally.
Equity-based notion~\cite{Stoica20fair} strives for equal influenced fraction across all communities.
However, these notions can hardly balance fairness and total influence and usually lead to a high influence reduction.
Especially, strict equity~\cite{Stoica20fair} is rather hard to achieve in influence maximization.
To address these shortcomings, Rahmattalabi {\em et al.}~\cite{rahmattalabi21} propose the welfare fairness that can control the trade-off between fairness and total influence by an inequality aversion parameter.
Based on the cardinal welfare theory~\cite{moulin2004fair}, the objective function of welfare fairness is to maximize the weighted summation over the exponential influenced fraction of all communities.
Fish {\em et al.}~\cite{gaps2019} also follow welfare functions and propose $\phi$-mean fairness, where the objective function becomes MMF when $\phi=-\infty$. 
However, they do not address the challenge of unbiased estimation of the fractional power.
In addition, none of the above studies address the scalability of the algorithms. 
Thus, to the best of our knowledge, we are the first to study scalability in the context of fair influence maximization.
\section{Model and Problem Definition}

{\parindent0pt
\paragraph*{\textbf{Information Diffusion Model}} In this paper, we adopt the well-studied {\em Independent Cascade (IC) model} as the basic information diffusion model. Under IC model, a social network is modeled as a directed influence graph $G = (V, E, p)$, where $V$ is the set of vertices (nodes) and $E \subseteq V \times V$ is the set of directed edges that connect pairs of nodes. For an edge $(v_i,v_j)\in E$, $p(v_i,v_j)$ indicates the probability that $v_i$ influences $v_j$. 
The diffusion of information or influence proceeds in discrete time steps. 
At time $t=0$, the {\em seed set} $S$ is selected to be active, denoted as $A_0$.
At each time $t\ge 1$, all nodes in $A_{t-1}$ try to influence their inactive neighbors following influence probability $p$. 
The set of activated nodes at step $t$ is denoted as $A_t$.
The diffusion process ends when there is no more node activated in a time step.
An important metric in influence maximization is the {\em influence spread}, denoted as $\sigma(S)$,
	which is defined as the expected number of active nodes when the propagation from the given seed set $S$ ends.
For the IC model, 
$\sigma(S) = \E[|A_0 \cup A_1 \cup A_2 \cup \dots|]$. 
We use $ap(v, S)$ to represent the probability that node $v$ is activated given the seed set $S$.
Then we have $\sigma(S) = \sum_{v\in V} ap(v, S)$.}

{\parindent0pt
\paragraph*{\textbf{Live-edge Graph}} Given the influence probability $p$, we can construct the live-edge graph $L = (V, E(L))$, 
	where each edge $(v_i,v_j)$ is selected independently to be a {\em live edge} with the probability $p(v_i,v_j)$. 
The influence diffusion in the IC model is equivalent to the deterministic propagation via bread-first traversal in a random live-edge graph $L$. 
Let $\Gamma(G,S)$ denote the set of nodes in graph $G$ that can be reached from the node set $S$. By the above live-edge graph model, we have $\sigma(S) = \E_L[|\Gamma(L,S)|] = \sum_L Pr[L|G] \cdot |\Gamma(L,S)|$, where the expectation is taken over the distribution of live-edge graphs, and $Pr[L|G]$ is the probability of sampling a live-edge graph $L$ in graph $G$.}

{\parindent0pt
\paragraph*{\textbf{Approximation Solution}} A set function $f:V\rightarrow \R$ is called {\em submodular} if for all $S\subseteq T \subseteq V$ and $u \in V\setminus T$, $f(S\cup \{u\}) - f(S) \ge f(T \cup \{u\}) - f(T)$. 
Intuitively, submodularity characterizes the diminishing return property often occurring in economics and operation research.
Moreover, a set function $f$ is called {\em monotone} if for all $S\subseteq T \subseteq V$, $f(S) \le f(T)$. 
It is shown in~\cite{kempe03} that influence spread $\sigma(\cdot)$ for the independent cascade model is a monotone submodular function. 
A non-negative monotone submodular function allows a greedy solution to
its maximization problem with $1-1/e$ approximation~\cite{NWF78}, which provides the technical foundation for most influence maximization tasks.}

{\parindent0pt
\paragraph*{\textbf{Fair Influence Maximization}} For a given graph $G$ with $n_G$ nodes, the classic influence maximization problem is to choose a seed set $S$ consisting of at most $k$ seeds to maximize the influence spread $\sigma(S, G)$. 
Assuming each node belongs to one of the disjoint communities $c\in \cC := \{1, 2, \dots, C\}$, such that $V_1 \cup V_2 \cup \dots \cup V_C = V$ where $V_c$ ($n_c = |V_c|$) denotes the set of nodes that belongs to community $c$.
Generally, {\em fair influence maximization (FIM)} aims to narrow the influence gap between different communities while maintaining the total influence spread as much as possible.
In this paper, we adopt the fair notion proposed by Rahmattalabi {\em et al.}~\cite{rahmattalabi21}, where the welfare function is used to aggregate the cardinal utilities of different communities. 
The goal is to select at most $k$ seed nodes, such that the objective function $F_\alpha(S)$ (also referred to as fair influence in this paper) is maximized, where $F_\alpha(S) = \sum_{c\in\cC} n_c\ut_c(S)^\alpha, 0<\alpha<1$. 
The utility $\ut_c(S)$ denotes the expected proportion of influenced nodes in the community $c$ with the seed set $S$.
Exponent $\alpha$ is the inequality aversion parameter that controls the trade-off between fairness and total influence, 
with $\alpha$ tending to $1$ for influence spread and $\alpha$ tending to $0$ for fairness.
We thus define the fair influence maximization problem in this paper as follows:}

\begin{definition}\label{def:fairinf}
    The Fair Influence Maximization (FIM) under the independent cascade model is the optimization task where the input includes the directed influence graph $G=(V,E,p)$, the non-overlapping community structure $\cC$, and the budget $k$.
    The goal is to find a seed set $S^*$ to maximize the fair influence, {\em i.e.}, $S^* = \argmax_{S: |S|\leq k} F_\alpha(S)$.
\end{definition}

According to~\cite{rahmattalabi21}, the fair influence $F_\alpha(S)$ is both monotone and submodular, which provides the theoretical basis for our efficient algorithm design, to be presented in the next section.

\section{Method}
The monotonicity and submodularity of the fair influence objective function enable a greedy approach for the maximization task. 
However, as commonly reported in influence maximization studies, naively implementing a greedy approach directly on the objective function will suffer a long running time. 
The reason is that accurate function evaluation requires a large number of Monte-Carlo simulations. 
In this section, we aim to significantly speed up the greedy approach by adapting the reverse influence sampling (RIS)~\cite{Borgs2014near, tang14, tang15}, which provides both theoretical guarantee and high efficiency. 
We propose the {\FIMM} algorithm that is efficient when the number of communities is small, which is hopefully a common situation such as gender and ethnicity.

For the convenience of reading, we list most important symbols featured in this paper in Table~\ref{tab:symbol}.

\begin{table}[!h]
    \caption{Important symbols appeared in this paper.}
    \centering
    \begin{tabular}{l|l}
    \toprule
        Symbol & Explanation \\ 
    \midrule
        $G=(V, E, p)$ & A network; \\ 
        $V$ & Node set of the network; \\ 
        $E$ & Edge set of the network; \\ 
        $n_G$ & The number of nodes in $G$, i.e. $n_G=|V|$; \\ 
        $p(v_i,v_j)$ & The probability that $v_i$ influence $v_j$; \\ 
        $\cC =\{c_1,c_2,\cdots\}$ & Community structure; \\ 
        $C$ & The number of communities in $\cC$; \\
        $V_c$ & The node set in community $c$; \\ 
        $n_c$ & The number of nodes in community $c$, i.e. $n_c = |V_c|$; \\ 
        $S$ & A seed set; \\ 
        $S^*$ & The optimal seed set for fair influence maximization; \\ 
        $ap(v,S)$ & The expected probability that $v$ is activated by $S$; \\ 
        $\sigma(S)$ & Influence spread of $S$, i.e. $\sigma(S) = \sum_{v\in V} ap(v,S)$; \\ 
        
        $\ut_c$ & The utility of $c$ (expected fraction of influenced nodes in $c$); \\ 
        $F_\alpha(S)$ & The fair influence of $S$; \\ 
        $\cR$ & A set of RR sets; \\ 
        $\cR_c$ & The set of RR sets rooted in community $c$; \\ 
        $\hat{F}_\alpha(S,\cR)$ & The unbiased estimator for fair influence of $S$ based on $\cR$; \\ 
        $\theta$ & The total number of RR sets; \\ 
        $\theta_c$ & The number of RR sets rooted in community $c$; \\ 
        $\alpha$ & The aversion parameter regarding fairness; \\ 
        $Q$ & The approximation parameter for Taylor expansion; \\ 
        $\varepsilon$ & The accuracy parameter; \\ 
        $\ell$ & The confidence parameter. \\
    \bottomrule
    \end{tabular}
    \label{tab:symbol}
\end{table}



\subsection{Unbiased Fair Influence}\label{sec3.1}

To estimate the influence spread, we may generate a number of live-edge graphs $\cL = \{L_1, L_2, \cdots, L_t \}$ as samples.
Then, for a given seed set $S$, $\hat{\sigma}(\cL,S) = \frac{1}{t}\sum_{i=1}^t |\Gamma(L_i,S)|$ is an unbiased estimator of $\sigma(S)$.
However, situations are completely different for fair influence. 
For each community $c$, its fair influence is actually $n_c \ut_c^\alpha$.
If we still generate a number of live-edge graphs and estimate $\ut_c$ by $\hat{\ut}_c(\cL, S) = \frac{1}{t}\sum_{i=1}^t |\Gamma(L_i,S) \cap V_c|/|V_c|$, then $\hat{\ut}_c(\cL, S)$ is an unbiased estimator for $\ut_c$,
 but $\hat{\ut}_c(\cL, S)^\alpha$ is actually a biased estimator of $\ut_c^\alpha$ for $0<\alpha<1$.
In fact, the value of $\ut_c^\alpha$ is generally higher than the true value, which is revealed by Jensen's Inequality.

\begin{restatable}{fact}{Jensen} \label{fact:jensen}
	(Jensen's Inequality) If $X$ is a random variable and $\phi$ is a concave function, then
	\begin{align}
	\E[\phi(X)] \leq \phi(\E[X]). \nonumber
	\end{align}
\end{restatable}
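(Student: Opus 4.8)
The plan is to prove the inequality via the supporting-line characterization of concavity, which is the standard and most direct route. The key geometric fact is that a concave function lies below each of its tangent (supporting) lines: for any point $x_0$ in the interior of the domain of $\phi$, there exists a constant $b$ (a supergradient of $\phi$ at $x_0$) such that
\begin{align}
\phi(x) \leq \phi(x_0) + b\,(x - x_0) \quad \text{for all } x \text{ in the domain.} \nonumber
\end{align}
First I would establish this supporting-line inequality. It follows from the fact that a concave function possesses one-sided derivatives at every interior point, with $\phi'_+(x_0) \leq \phi'_-(x_0)$; taking $b$ to be any value in the interval $[\phi'_+(x_0), \phi'_-(x_0)]$ produces a line through $(x_0,\phi(x_0))$ that dominates $\phi$ everywhere, precisely by the definition of concavity (the chords lie below the graph, so the graph lies below the tangents).

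Next I would specialize the supporting point to $x_0 = \E[X]$, which lies in the domain of $\phi$ since $X$ takes its values there. Substituting the random variable $X$ for $x$ in the pointwise inequality above yields the almost-sure bound
\begin{align}
\phi(X) \leq \phi(\E[X]) + b\,(X - \E[X]). \nonumber
\end{align}
Taking the expectation of both sides and invoking linearity of expectation, the linear correction term collapses because $\E[X - \E[X]] = 0$, leaving $\E[\phi(X)] \leq \phi(\E[X])$, which is exactly the claim.

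The step I expect to be the main obstacle is the rigorous justification of the supporting line when $x_0 = \E[X]$ falls on the boundary of the domain, where $\phi$ may fail to have a finite one-sided derivative. In the paper's application $\phi(x) = x^\alpha$ on $[0,1]$, the only problematic point is $x_0 = 0$, at which the slope is infinite. This is a degenerate case rather than a genuine difficulty: if $\E[X] = 0$ with $X \geq 0$, then $X = 0$ almost surely and the inequality holds trivially with equality, so it can be handled separately. For any nondegenerate seed set the relevant utility satisfies $\E[X] = \ut_c > 0$, placing $x_0$ strictly in the interior and making the tangent-line argument apply directly without qualification.
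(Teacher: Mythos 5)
The paper states this fact without proof---it is invoked as a standard result (used only to explain why $\hat{\ut}_c(\cL,S)^\alpha$ is a biased estimator of $\ut_c^\alpha$)---so there is no in-paper argument to compare yours against. Your supporting-line (supergradient) proof is the standard textbook argument and it is correct: concavity gives one-sided derivatives at interior points with $\phi'_+(x_0) \leq \phi'_-(x_0)$, any $b$ in that interval yields a line through $(x_0, \phi(x_0))$ dominating $\phi$, and taking $x_0 = \E[X]$, substituting $X$, and applying linearity of expectation kills the linear term since $\E[X - \E[X]] = 0$. Your treatment of the boundary case is also the right refinement for this paper's setting: with $\phi(x) = x^\alpha$ on $[0,1]$, the only point lacking a finite supergradient is $x_0 = 0$, and there $\E[X] = 0$ with $X \geq 0$ forces $X = 0$ almost surely, so the inequality holds with equality. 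The only implicit assumption worth flagging is integrability---the statement tacitly requires $\E[X]$ and $\E[\phi(X)]$ to exist, which is automatic here because $X$ is the indicator-type variable $|\Gamma(L,S)\cap V_c|/|V_c| \in [0,1]$ and $\phi$ is bounded on $[0,1]$. No gaps.
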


Therefore, our first challenge in dealing with the welfare fairness objective is to provide an unbiased estimator for the fractional power value of $\ut_c^\alpha$.
We meet this challenge by incorporating Taylor expansion as in Lemma~\ref{lem:fairinf}.

\begin{restatable}{lemma}{FairInf} \label{lem:fairinf}
    For a given seed set $S$ and an inequality aversion parameter $\alpha$, the fair influence
	\begin{align}
	       F_\alpha(S) = \sum_{c\in\cC}n_c \left( 1 - \alpha \sum_{n=1}^\infty \eta(n,\alpha)\big(1-\ut_c(S) \big)^n  \right), \eta(n, \alpha) =
\begin{cases}
1, &\text{$n = 1$},\\
\frac{(1-\alpha)(2-\alpha)...(n-1-\alpha)}{n!}, &\text{$n\geq 2$}.
\end{cases} \nonumber
	\end{align}
\end{restatable}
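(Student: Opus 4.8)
The plan is to reduce the claimed identity to a single-community statement and then recognize it as the generalized binomial (Taylor) expansion of $t^\alpha$ about $t=1$. Since $F_\alpha(S)=\sum_{c\in\cC} n_c\,\ut_c(S)^\alpha$ by definition, and the right-hand side of the lemma is also a weighted sum over communities with the same weights $n_c$, it suffices to prove the scalar identity
\[
t^\alpha = 1 - \alpha\sum_{n=1}^\infty \eta(n,\alpha)\,(1-t)^n
\]
for every value $t=\ut_c(S)\in[0,1]$, and then multiply by $n_c$ and sum over $c\in\cC$.

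First I would substitute $y=1-t\in[0,1]$ and expand $(1-y)^\alpha$ using the generalized binomial series $(1-y)^\alpha=\sum_{n=0}^\infty \binom{\alpha}{n}(-1)^n y^n$, where $\binom{\alpha}{n}=\alpha(\alpha-1)\cdots(\alpha-n+1)/n!$. The $n=0$ term contributes the constant $1$. For $n\ge 1$ I would rewrite the falling factorial by pulling out the leading factor $\alpha$ and flipping signs in the remaining $n-1$ factors, which gives $(\alpha-1)(\alpha-2)\cdots(\alpha-n+1)=(-1)^{n-1}(1-\alpha)(2-\alpha)\cdots(n-1-\alpha)$. Combining this with the external $(-1)^n$ produces the sign $(-1)^{n-1}(-1)^n=-1$, so that $\binom{\alpha}{n}(-1)^n=-\alpha\,\eta(n,\alpha)$. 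This matches the definition of $\eta$ exactly, including the base case $n=1$, where the product is empty and $\eta(1,\alpha)=1$. Hence the series coefficients coincide term-by-term with $-\alpha\,\eta(n,\alpha)$, yielding the scalar identity.

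The step I expect to require the most care is the convergence at the boundary of the domain rather than the algebra of coefficients. For $y\in[0,1)$ the binomial series converges absolutely, so the rearrangement above is immediately valid there. The delicate case is $y=1$, i.e.\ a community with $\ut_c(S)=0$, which lies on the radius of convergence; here I would invoke the fact that for $0<\alpha<1$ the binomial series still converges at this endpoint, and appeal to Abel's theorem to identify its sum with the continuous limit $\lim_{t\downarrow 0} t^\alpha = 0$, consistent with $0^\alpha=0$. This guarantees the identity holds on the entire interval $[0,1]$ that $\ut_c(S)$ can occupy. With the scalar identity established for all admissible $t$, summing $n_c\,\ut_c(S)^\alpha$ over $c\in\cC$ recovers the stated expression for $F_\alpha(S)$, completing the proof.
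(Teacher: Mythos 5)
Your proposal is correct and follows essentially the same route as the paper's proof: both expand $\ut_c(S)^\alpha = \bigl(1+(\ut_c(S)-1)\bigr)^\alpha$ via the generalized binomial (Taylor) series and absorb the signs of the falling factorial into the coefficient $-\alpha\,\eta(n,\alpha)$. The only difference is that you additionally verify convergence at the endpoint $\ut_c(S)=0$ via Abel's theorem, a point the paper's proof passes over silently, so your write-up is if anything slightly more careful than the original.
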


\begin{proof}
By Taylor expansion of binomial series, we have
\begin{align*}
    (1+x)^\alpha = 1+\sum_{n=1}^\infty \binom{\alpha}{n} x^n, \binom{\alpha}{n} = \frac{\alpha(\alpha-1)...(\alpha-n+1)}{n!}.
\end{align*}
By definition of fair influence in Definition~\ref{def:fairinf}, we have
\begin{align}
    F_{\alpha}(S) 
    = \sum_{c\in \cC} n_c\big( 1 + \big(\ut_c(S) - 1\big) \big)^\alpha 
    & = \sum_{c\in\cC}n_c \left( 1+\sum_{n=1}^\infty \binom{\alpha}{n} \big(\ut_c(S) -1 \big)^n \right) \nonumber\\
    & = \sum_{c\in\cC}n_c \left( 1 - \alpha \sum_{n=1}^\infty \eta(n,\alpha)\big(1-\ut_c(S) \big)^n  \right) \label{eq:fairinf}
\end{align}
where
\begin{align*}
\eta(n, \alpha) =
\begin{cases}
1, &\text{$n = 1$},\\
\frac{(1-\alpha)(2-\alpha)...(n-1-\alpha)}{n!}, &\text{$n\geq 2$}.
\end{cases}
\end{align*}
Thus concludes the proof.
\end{proof}

Lemma~\ref{lem:fairinf} demonstrates that the calculation of fair influence with fractional powers can be transformed into the summation of integral powers. 
Further, we can get an unbiased estimator for integral powers of arithmetic mean as given in Lemma~\ref{lem:estinter}.

\begin{restatable}{lemma}{EstimatorInter} \label{lem:estinter}
    \cite{glasser1961unbiased} Suppose that a simple random sample of size $m$ is to be drawn, with replacement, in order to estimate $\mu^n$. An unbiased estimator for $\mu^n$ ($n\leq m$) is
    \begin{align}
        \hat \mu^n = \frac{(m-n)!}{m!}\{ \textstyle\sum x_{i_1} x_{i_2} \cdots x_{i_n} \} (i_1\neq i_2\neq \cdots \neq i_n)
    \end{align}
    where the summation extends over all permutations of all sets of $n$ observations in a sample subject only to the restriction noted.
\end{restatable}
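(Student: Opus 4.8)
The plan is to exploit the single structural fact that sampling with replacement makes the observations $x_1,\dots,x_m$ independent and identically distributed, each with mean $\E[x_i]=\mu$. The estimator $\hat\mu^n$ is a normalized sum of products $x_{i_1}x_{i_2}\cdots x_{i_n}$ taken over ordered tuples of \emph{distinct} indices, and this distinctness is exactly what lets me invoke independence of the $n$ factors appearing in each product. So the proof reduces to computing a single product's expectation, counting the tuples, and checking that the normalizing constant cancels the count.

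First I would evaluate the expectation of one product term. For a fixed tuple $(i_1,\dots,i_n)$ with $i_1\neq i_2\neq\cdots\neq i_n$, the factors $x_{i_1},\dots,x_{i_n}$ are mutually independent, so multiplicativity of expectation over independent variables gives
\[
\E[x_{i_1}x_{i_2}\cdots x_{i_n}]=\prod_{j=1}^n \E[x_{i_j}]=\mu^n,
\]
and this value is the same for every admissible tuple, independent of which distinct indices were selected. Next I would count the admissible tuples: the number of ordered $n$-tuples of distinct indices drawn from $\{1,\dots,m\}$ is $m(m-1)\cdots(m-n+1)=m!/(m-n)!$, which also explains the hypothesis $n\le m$ (otherwise no distinct tuple exists and the sum is empty). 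Combining these by linearity of expectation,
\[
\E[\hat\mu^n]=\frac{(m-n)!}{m!}\sum_{i_1\neq\cdots\neq i_n}\E[x_{i_1}\cdots x_{i_n}]=\frac{(m-n)!}{m!}\cdot\frac{m!}{(m-n)!}\,\mu^n=\mu^n,
\]
which establishes unbiasedness and finishes the argument.

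There is essentially no hard obstacle here; the only point that demands care is the distinctness restriction on the indices, and I would make sure to state clearly why it matters. If one instead used the naive plug-in estimator $\big(\tfrac1m\sum_i x_i\big)^n$, expanding the power would produce terms with repeated indices whose expectations involve higher moments such as $\E[x_i^2]=\mathrm{Var}(x_i)+\mu^2$ rather than $\mu^2$, so the plug-in is biased; this is the integer-power counterpart of the bias for fractional powers flagged via Jensen's inequality in Fact~\ref{fact:jensen}. Restricting the summation to distinct indices is precisely what removes these cross moments and isolates $\mu^n$, so I would emphasize that step rather than the routine counting and expectation calculations.
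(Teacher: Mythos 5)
Your proof is correct, and it is the standard argument: with-replacement sampling makes the $n$ factors in each distinct-index product mutually independent, so each product has expectation $\mu^n$, and the count $m!/(m-n)!$ of ordered distinct tuples cancels the normalizing constant. Note that the paper itself gives no proof of this lemma (it is imported verbatim from Glasser's 1961 result), so your write-up simply supplies the same elementary argument that the cited source uses; your closing remark on why distinct indices are essential (repeated indices would introduce higher moments such as $\E[x_i^2]$, paralleling the Jensen-type bias for fractional powers) is a worthwhile addition consistent with the paper's motivation in Section 3.1.
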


\subsection{Unbiased Fair Influence with RR sets}\label{sec3.2}

Many efficient influence maximization algorithms such as IMM~\cite{tang15} are based on the RIS approach, which generates a suitable number of reverse-reachable (RR) sets for influence estimation. 

{\parindent0pt
\paragraph*{\textbf{RR set}}
An RR set $RR(v)$ (rooted at node $v \in V$) can be generated by reversely simulating the influence diffusion process starting from the root $v$, and then adding all nodes reached by reversed simulation into this RR set. 
In this way, $RR(v)$ is equivalent to collecting all nodes that can reach $v$ in the random live-edge graph $L$, denoted by $\Gamma'(L, v)$. 
Intuitively, each node $u \in RR(v)$ if selected as a seed would activate $v$ in this random diffusion instance. 
We say that $S$ covers a RR set $RR(v)$ if $S \cap RR(v) \neq \varnothing$.
The expected activated probability $ap(v, S)$ is thus equivalent to the probability that $S$ covers a randomly generated $v$-rooted RR set.
In the following, we use $RR(v)$ to represent a randomly generated RR set when $v$ is not specified, {\em i.e.}, $RR(v) = \Gamma'_{L\sim \cU(P_L)}(L, v)$ where $P_L$ is the space of all live-edge graphs and $\cU(\cdot)$ denotes the uniform distribution.

Let $X_c$ be the random event that indicates whether a randomly selected node in community $c$ would be influenced in a diffusion instance by the given seed set $S$.
As mentioned above, an RR set maps to a random diffusion instance.
Assuming we generate $\cR$ consisting of $\theta$ RR sets in total and each community $c$ gets $\theta_c$ RR sets.
Let $\cR_c$ be the set of RR sets that are rooted in the community $c$, then $|\cR_c| = \theta_c$.
Let $X_c^i$ ($i\in[\theta_c]$) be a random variable for each RR set $R_i\in \cR_c$, such that $X_c^i = 1$ if $\cR_c^i \cap S \neq \varnothing$, and $X_c^i = 0$ otherwise.
Then, we have $\E[X_c] = \ut_c$ and $\E[\overline{X_c}] = 1-\ut_c$.

Based on Lemma~\ref{lem:fairinf} and Lemma~\ref{lem:estinter}, we can get the unbiased estimator of $\E[X_c]^\alpha$ through RR sets as
\begin{align}
    \E[X_c]^\alpha
    & = 1 - \alpha \sum_{n=1}^\infty \eta(n,\alpha)(1-\E[X_c])^n \nonumber \\
    &= 1 - \alpha \sum_{n=1}^\infty \eta(n,\alpha)\frac{(\theta_c-n)!}{\theta_c!}\left\{ \sum \overline{X_c^{i_1}}\cdot\overline{X_c^{i_2}} \cdots \overline{X_c^{i_n}} \right\},
\end{align}

Further, we can get the unbiased estimator of the fair influence $F_\alpha(S)$ as 
\begin{align}
   \hat F_\alpha(S, \cR) 
   & = \sum_{c\in\cC} n_c \left(1 - \alpha \sum_{n=1}^\infty \eta(n,\alpha)\frac{(\theta_c-n)!}{\theta_c!}\left\{ \sum \overline{X_c^{i_1}}\cdot\overline{X_c^{i_2}} \cdots \overline{X_c^{i_n}} \right\} \right) \nonumber\\
   & = \sum_{c\in\cC} n_c \left(1 - \alpha \sum_{n=1}^\infty \eta(n,\alpha) \prod_{i=0}^{n-1} \frac{\pi_c-i}{\theta_c-i} \right) \label{eq:fairunbiasob}
\end{align}
where $\pi_c = \theta_c - \sum_{i\in[\theta_c]} X_c^i$, and 
\begin{align*}
\eta(n, \alpha) =
\begin{cases}
1, &\text{$n = 1$},\\
\frac{(1-\alpha)(2-\alpha)...(n-1-\alpha)}{n!}, &\text{$n\geq 2$}.
\end{cases}
\end{align*}

In the following, we consider Eq.~\ref{eq:fairunbiasob} as our objective function to deal with the fair IM problem.

\subsection{\FIMM}\label{sec3.3}

For a given seed set $S$, 
let $\varphi[c]$ denote the number of all $u$-rooted ($u\in V_c$) RR sets covered by $S$, and $\kappa[v][c]$ denote the number of all $u$-rooted ($u\in V_c$) RR sets that covered by $v$ ($v\in V\setminus S$) but not by $S$,
then the marginal fair influence gain of $v$ is 
\begin{align}
    \hat F_\alpha(v |S) 
    & = \sum_{c\in\cC} n_c \left(1 - \alpha \sum_{n=1}^\infty \eta(n,\alpha) \prod_{i=0}^{n-1} \frac{\theta_c-\kappa[v][c] -\varphi[c] -i}{\theta_c-i} \right) \nonumber\\
    & - \sum_{c\in\cC} n_c \left(1 - \alpha \sum_{n=1}^\infty \eta(n,\alpha) \prod_{i=0}^{n-1} \frac{\theta_c-\varphi[c]-i}{\theta_c-i} \right) \nonumber\\
    & = \sum_{c\in\cC} \alpha n_c \sum_{n=1}^\infty \eta(n,\alpha) \left( \prod_{i=0}^{n-1} \frac{\theta_c-\varphi[c]-i}{\theta_c-i} - \prod_{i=0}^{n-1} \frac{\theta_c-\kappa[v][c] -\varphi[c] -i}{\theta_c-i} \right) \label{eq:Sgain}
\end{align}


Therefore, when generating RR sets, we have to count $\kappa[v][c]$ which indicates the community-wise coverage for $v$ and record $\eta[v]$ which indicates the linked-list from $v$ to all its covered RR sets, as shown in Algorithm~\ref{alg:RRG}.
As shown in lines 6$\sim$9, when generating a random $v$-rooted RR set $RR(v)$, we count all nodes $u\in RR(v)$ and raise all $\kappa[u][c(v)]$ by $1$, where $c(v)$ indicates $v$'s community label.
It should be noted that modifying $\kappa[u][c(v)]$ can be accomplished simultaneously when generating $RR(v)$ by the reverse influence sampling.

\begin{algorithm}[!t]
	\caption{{\RRGen}: Generate RR sets}\label{alg:RRG}
	\KwIn{Graph $G=(V,E,p)$, community $\cC$, budget $k$, number of RR sets for each community $\theta_c$}
 	\KwOut{RR sets $\cR$, community-wise coverage $\kappa$, linked-list $\eta$ from nodes to covered RR sets}
 	\newalg{Initialize $\kappa[v][c]=0$ for all $v\in V$, $c\in \cC$}\;
 	\newalg{Initialize $\eta[v]=\varnothing$ for all $v\in V$}\;
 	\newalg{$\cR = \varnothing$}\;
 	\For{$c \in \cC$}
 	{
 	    \For{$i = 1$ to $\theta_c$}
 	    {
                Select a random node $v$ in community $c$\;
 	        Sample a random RR set $R = RR(v)$\; 
 	        \For{$u\in R$}
 	        {
 	           $\kappa[u][c(v)] = \kappa[u][c(v)] + 1$\; 
 	        }
 	        $\cR = \cR \cup \{R\}$\;
 	        $\eta[v] = \eta[v] \cup \{R\}$\;
 	    }
 	}
\end{algorithm}

\begin{algorithm}[!b]
	\caption{{\FIMM}: Fair Influence Maximization }\label{alg:fimm}
	\KwIn{Graph $G=(V,E,p)$, community $\cC$, budget $k$, approximation parameter $Q$}
 	\KwOut{Seed set $S$}
    $(\cR, \kappa, \eta) = \;${\RRGen}$(G, \cC, k, \theta_c)$\\
	\newalg{Initialize $\varphi[c] = 0$ for all $c\in \cC$}; //indicating the number of covered RR sets rooted in $c$\\
        \newalg{Initialize $\gamma(v)$ according to Eq.(\ref{eq:Sgain}) for all $v \in V$}; //indicating initial marginal gain\\
	\newalg{Initialize $covered[R] = false$ for all $R \in \cR$}; //indicating whether $R$ is covered\\
	\newalg{Initialize $updated[v] = true$ for all $v\in V$}; //indicating whether $\kappa[v]$ is updated \\
	\newalg{$S = \varnothing$}\;
	\For{$i = 1$ to $k$}
	{
        \While{true}
        {
            $v = \arg\max_{u\in V\setminus S} \gamma(u)$\;
            \If{$updated(v) == false$}
            {
                Updating $\gamma(v)$ according to Eq.(\ref{eq:Sgain})\;
                $updated(v) = true$\;
            }
            \Else{
                $S = S \cup \{v\}$\;
                \For {$v\in V$}
	            {
	                $updated(v) = false$\;
	            }
                {\bf break}\;
            }
        }
	    \For {$c\in \cC$}
        {
            $\varphi[c] = \varphi[c] + \kappa[v][c]$\;
        }
        \For {all $R \in \eta[v] \wedge covered[R] == false$}
        {
            $covered[R] = true$\;
            $r = root(R)$\;
            \For {all $u\in R \wedge u\neq v$}
            {
                $\kappa[u][c(r)] = \kappa[u][c(r)] - 1$\;
            }
        }
	}
\end{algorithm}

Based on the RR sets generated by Algorithm~\ref{alg:RRG}, we present our {\FIMM} algorithm (Algorithm~\ref{alg:fimm}) to select $k$ seed nodes that maximize Eq.~\ref{eq:fairunbiasob} through a greedy approach, {\em i.e.}, iteratively selecting a node with the maximum alternative marginal fair influence gain as presented in Eq.~\ref{eq:Sgain}. 
Apparently, it costs $O(C)$ to calculate $\hat F(v|S)$ for any $v$ where $C$ is the number of communities. 
When $C$ is small ({\em i.e.}, a constant), it would be efficient to compute $\hat F(v|S)$ for all $v\in V$ in $O(Cn_G)$.
Besides, since $\hat F_\alpha(S,\cR)$ is submodular and monotone, we can adopt a lazy-update strategy~\cite{Leskovec07} that selects $v$ with the maximal $\hat F_\alpha(v|S)$ as a seed node if $\hat F_\alpha(v|S)$ is still the maximum after updating.
This lazy-update strategy (lines 10$\sim$12) can cut down a great amount of redundant time cost that can be empirically up to 700 times faster than a simple greedy algorithm~\cite{Leskovec07}.

There are two vital counting arrays in Algorithm~\ref{alg:fimm}, {\em i.e.}, $\varphi[c]$ and $\kappa[v][c]$. $\varphi[c]$ records and updates the number of RR sets covered by $S$ in community-wise.
By lines 20$\sim$24, $\kappa[v][c]$ keeps updating and always indicates the extra coverage of $v$ on all $u$-rooted ($u\in V_c$) RR sets besides $S$.
It establishes a convenient way for updating $\varphi(c)$ that only needs to increase $\varphi(c)$ by $\kappa[v][c]$ where $v$ is the newly selected node for all $c\in \cC$.
If we denote the original community-wise coverage as $\kappa'$, which means $(\sim, \kappa', \sim) =$\RRGen$(G, \cC, k, \theta_c)$, then it holds $\kappa'[v][c] = \kappa[v][c] + \varphi[c]$ for all $v\in V$ and $c\in \cC$ in Algorithm~\ref{alg:fimm}.








\subsection{Number of RR sets}\label{sec3.4}
In this subsection, we discuss the number of RR sets needed to approximate the fair influence with high probability. 
Let $OPT_F$ denote the optimal solution and $S^*$ denote the corresponding optimal seed set for the fair influence maximization problem defined in this paper, {\em i.e.}, $OPT_F = F_\alpha(S^*) = \sum_{c\in \cC}n_c \ut_c(S^*)^\alpha = \sum_{c\in\cC}n_c \left( 1 - \alpha \sum_{n=1}^\infty \eta(n,\alpha)\big(1-\ut_c(S^*) \big)^n  \right)$.
Since this paper deals with the fair influence maximization problem, we thus assume that the maximal community utility $max_{c\in\cC}\ut_c(S^\#) \geq max_{c\in\cC}\ut_c(S^*)$ of an arbitrary seed set $S^\#$ would not be too big.

\begin{restatable}{lemma}{lemdeltaone}\label{lem:delta1}
Let $\delta_1 \in (0,1)$, $\varepsilon_1 \in (0,1)$, and $\theta_1 = \frac{12Q^2\ln(C/\delta_1)}{\varepsilon_1^2 (1-b)}$ where $Q$ is the approximation parameter, $b = max(\ut_c(S^*)), \forall c\in \cC$, and $S^* = \argmax_{S: |S|\leq k} F_\alpha(S)$ denotes the optimal solution for the FIM problem based on $\cR$, then $\hat F_\alpha(S^*, \cR) \geq (1-\varepsilon_1)\cdot OPT_F$ holds at least $1-\delta_1$ probability if $\theta \geq C\theta_1$.
\end{restatable}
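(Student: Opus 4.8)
The plan is to prove the statement by a union bound over the $C$ communities combined with a concentration argument tailored to the nonlinear unbiased estimator $\hat{F}_\alpha$. Since $\hat{F}_\alpha(S^*,\cR) = \sum_{c\in\cC} n_c\,\widehat{\ut_c^\alpha}$ and $OPT_F = \sum_{c\in\cC} n_c\,\ut_c(S^*)^\alpha$ with weights $n_c>0$, it suffices to control each per-community estimator $\widehat{\ut_c^\alpha} = 1 - \alpha\sum_{n\ge 1}\eta(n,\alpha)\prod_{i=0}^{n-1}\frac{\pi_c-i}{\theta_c-i}$ and then aggregate. Allocating a failure probability $\delta_1/C$ to each community accounts for the $\ln(C/\delta_1)$ factor in $\theta_1$.

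First I would identify the right random quantity to concentrate. Fixing $S=S^*$ and a community $c$, the estimator $\widehat{\ut_c^\alpha}$ is a deterministic, monotone function of $\pi_c = \theta_c - \sum_{i\in[\theta_c]} X_c^i$, the number of $c$-rooted RR sets \emph{not} covered by $S^*$. Crucially $\pi_c = \sum_{i}\overline{X_c^i}$ is a sum of $\theta_c$ i.i.d.\ Bernoulli variables with mean $\E[\pi_c] = \theta_c(1-\ut_c(S^*))$, so a multiplicative Chernoff bound gives $\Pr[\pi_c \ge (1+\epsilon)\E[\pi_c]] \le \exp(-\epsilon^2\theta_c(1-\ut_c(S^*))/3)$. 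This exponent is smallest for the community of largest utility, where $1-\ut_c = 1-b$, which is exactly where the factor $1/(1-b)$ in $\theta_1$ originates; requiring the tail to be at most $\delta_1/C$ for every $c$ forces $\theta_c \ge 3\ln(C/\delta_1)/(\epsilon^2(1-b))$.

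Next I would convert the multiplicative deviation of $\pi_c$ into a bounded relative deviation of $\widehat{\ut_c^\alpha}$. On the good event $\pi_c \le (1+\epsilon)\theta_c(1-\ut_c)$ I would use the deterministic inequality $\prod_{i=0}^{n-1}\frac{\pi_c-i}{\theta_c-i} \le (\pi_c/\theta_c)^n \le (1+\epsilon)^n(1-\ut_c)^n$ together with $\eta(n,\alpha)\ge 0$ to bound $\ut_c^\alpha - \widehat{\ut_c^\alpha}$ term by term against $\alpha\sum_n \eta(n,\alpha)[(1+\epsilon)^n-1](1-\ut_c)^n$. Here the Taylor truncation parameter $Q$ enters: for the retained $n\le Q$ terms, $(1+\epsilon)^n-1 \le (1+\epsilon)^Q-1 \le e^{\epsilon Q}-1$, so the choice $\epsilon = \varepsilon_1/(2Q)$ keeps each term's amplification at $O(\varepsilon_1)$. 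This choice is precisely what turns $3\ln(C/\delta_1)/(\epsilon^2(1-b))$ into the stated $\theta_1 = 12 Q^2\ln(C/\delta_1)/(\varepsilon_1^2(1-b))$, the constant $12 = 3\cdot 2^2$ collecting the Chernoff constant and the $(2Q)^2$. After summing the series and weighting by $n_c$, this is designed to deliver $\widehat{\ut_c^\alpha}\ge(1-\varepsilon_1)\ut_c^\alpha$ on the good event, and a union bound over $c$ then gives $\hat{F}_\alpha(S^*,\cR)\ge(1-\varepsilon_1)OPT_F$ with probability at least $1-\delta_1$.

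The hard part will be this translation step. Unlike the linear spread estimator in the standard RIS analysis, $\widehat{\ut_c^\alpha}$ is a nonlinear unbiased estimator of a fractional power, so I must simultaneously control the multiplicative amplification $(1+\epsilon)^n$ across the retained $n\le Q$ terms and the Taylor tail beyond $Q$, and verify that the accumulated per-term error genuinely closes to a relative guarantee against $OPT_F$ rather than against the cruder scale $\sum_c n_c(1-\ut_c^\alpha)$ that the naive term-by-term bound produces. Pinning down the exact dependence on $Q$, and confirming that the truncated, amplified series still yields a true $(1-\varepsilon_1)$-relative bound, is the technically delicate core that makes this concentration argument substantially more involved than the standard RIS estimate.
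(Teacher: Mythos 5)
Your plan reproduces the paper's proof architecture almost line for line: allocate failure probability $\delta_1/C$ to each community, apply the multiplicative Chernoff upper tail to $\pi_c=\theta_c-\sum_i X_c^i$, use the elementary bound $\prod_{i=0}^{n-1}\frac{\pi_c-i}{\theta_c-i}\le(\pi_c/\theta_c)^n$, and take a deviation parameter of order $\varepsilon_1/(2Q)$, which is exactly where the paper's constant $12Q^2=3\cdot(2Q)^2$ and the factor $1/(1-b)$ come from. However, the step you defer as ``the technically delicate core'' is a genuine gap, and it is precisely the step the paper closes with a device your plan lacks. With the deviation $\epsilon=\varepsilon_1/(2Q)$ fixed \emph{a priori}, your term-by-term estimate gives $\ut_c^\alpha-\widehat{\ut_c^\alpha}\lesssim\varepsilon_1\cdot\alpha\sum_{n}\eta(n,\alpha)\big(1-\ut_c(S^*)\big)^n=\varepsilon_1\big(1-\ut_c(S^*)^\alpha\big)$, i.e., an error relative to the \emph{complementary} quantity $1-\ut_c^\alpha$; this implies the needed $\widehat{\ut_c^\alpha}\ge(1-\varepsilon_1)\ut_c(S^*)^\alpha$ only when $\ut_c(S^*)^\alpha\ge 1/2$, and no tuning of the constant in $\epsilon$ repairs it, since the ratio $(1-\ut_c^\alpha)/\ut_c^\alpha$ is unbounded for low-utility communities.

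The paper's resolution is to fix the target before fixing the Chernoff deviation: for each community it demands that the estimated series $\alpha\sum_n\eta(n,\alpha)\prod_{i=0}^{n-1}\frac{\pi_c-i}{\theta_c-i}$ be at most $(1+\varepsilon_1')$ times the true series, with the \emph{rescaled} tolerance $\varepsilon_1'=\frac{1-\alpha\sum_n\eta(n,\alpha)(1-\ut_c(S^*))^n}{\alpha\sum_n\eta(n,\alpha)(1-\ut_c(S^*))^n}\,\varepsilon_1=\frac{\ut_c(S^*)^\alpha}{1-\ut_c(S^*)^\alpha}\,\varepsilon_1$, which by construction is \emph{exactly} equivalent to $\widehat{\ut_c^\alpha}\ge(1-\varepsilon_1)\ut_c(S^*)^\alpha$. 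It then sets $1+\varepsilon_0=\sqrt[Q]{1+\varepsilon_1'}$, reduces the sum comparison to the single highest retained power $n=Q$, and applies Chernoff with $\varepsilon_0\ge\varepsilon_1'/(2Q)$. The stated $\theta_1$ suffices only because the paper further asserts $\varepsilon_1'\ge\varepsilon_1$ (claimed under $\alpha\le 1/2$); note that this assertion is itself equivalent to $\ut_c(S^*)^\alpha\ge 1/2$, so the regime your argument cannot handle is excluded by the paper only through this implicit assumption---your difficulty is real, and the $\varepsilon_1'$ rescaling is the missing idea that converts the term-by-term bound into an exact relative guarantee against $OPT_F$. One smaller point: your worry about the Taylor tail beyond $Q$ is moot in this lemma's direction, since every series term is nonnegative, so truncating the estimator at $Q$ can only increase $\widehat{\ut_c^\alpha}$ and strengthen the lower bound; the tail genuinely matters only in the companion Lemma~2 (the upper-deviation bound for bad seed sets), where the paper introduces the correction terms $y$ and $x$.
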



\begin{restatable}{lemma}{lemdeltatwo}\label{lem:delta2}
Let $\delta_2 \in (0,1)$, $\varepsilon_2 = (\frac{e}{e-1})\varepsilon - \varepsilon_1$, and $\theta_2 = \frac{8Q^2\ln(C\binom{n_G}{k}/\delta_2)}{\varepsilon_2^2 (1-b_0)}$ where $Q$ is the approximation parameter, $b_0 = max(\ut_c(S^\#)), \forall c\in \cC$ where $S^\#$ could be an arbitrary fair solution.
For each bad $S$ (which indicates $F_\alpha(S) < (1-1/e- \varepsilon) \cdot OPT_F$, $\hat F_\alpha(S, \cR) \geq (1-1/e)(1-\varepsilon_1)\cdot OPT_F$ holds at most $\delta_2/\binom{n_G}{k}$ probability if $\theta \geq C\theta_2$.
\end{restatable}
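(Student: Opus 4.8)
The plan is to follow the standard two-step RIS concentration template (as in the \IMM{} analysis of Tang \emph{et al.}), adapted to the fractional-power estimator. I would fix one bad seed set $S$, so that $F_\alpha(S) < (1-1/e-\varepsilon)\cdot OPT_F$, and first translate the target event into a large-deviation statement. Since $\hat F_\alpha(S,\cR)$ is an unbiased estimator of $F_\alpha(S)$ (by the construction in Section~\ref{sec3.2}), the event $\hat F_\alpha(S,\cR)\ge (1-1/e)(1-\varepsilon_1)\cdot OPT_F$ forces $\hat F_\alpha(S,\cR)$ to exceed its own mean by at least the gap
\begin{align*}
(1-1/e)(1-\varepsilon_1)\cdot OPT_F - F_\alpha(S) > \big[(1-1/e)(1-\varepsilon_1) - (1-1/e-\varepsilon)\big]\,OPT_F = (1-1/e)\,\varepsilon_2\, OPT_F,
\end{align*}
where the final equality uses exactly the definition $\varepsilon_2 = \tfrac{e}{e-1}\varepsilon - \varepsilon_1$ together with $(1-1/e)\tfrac{e}{e-1}=1$. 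This is the deviation the concentration bound must rule out.

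Next I would reduce this aggregate over-estimate to a per-community tail event. Writing $\hat F_\alpha(S,\cR) = \sum_{c\in\cC} n_c\,\hat g_c$ with $\hat g_c = 1-\alpha\sum_{n\ge 1}\eta(n,\alpha)\prod_{i=0}^{n-1}\frac{\pi_c-i}{\theta_c-i}$ and $\pi_c=\theta_c-\sum_{i}X_c^i$ the number of $c$-rooted RR sets uncovered by $S$, the key observation is that $\hat g_c$ is a deterministic, \emph{monotonically decreasing} function of the single sufficient statistic $\pi_c$, while $\pi_c$ is a sum of $\theta_c$ i.i.d.\ Bernoulli variables with mean $\theta_c q_c$, where $q_c := 1-\ut_c(S)$. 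Hence $\hat g_c$ tends to overestimate $\ut_c(S)^\alpha$ only when $\pi_c$ falls below $\theta_c q_c$. Because $\ut_c(S)\le b_0$ we have $q_c\ge 1-b_0$, so the Chernoff lower-tail exponent is uniformly controlled by $1-b_0$ --- this is the source of the $(1-b_0)$ factor in $\theta_2$.

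The main obstacle is that $\hat g_c$ is not itself an average of bounded i.i.d.\ terms (it is a $U$-statistic-type estimator of a fractional power), so no Chernoff/Hoeffding bound applies to it directly. I would circumvent this by passing through $\pi_c$: a relative deviation $\pi_c \ge (1-t)\,\theta_c q_c$ propagates to $\hat g_c$ through the series, and using $1-(1-t)^n \le nt \le Qt$ for the terms $n\le Q$ retained by the approximation parameter, the induced over-estimate of $\hat g_c$ is of order $Qt$. Choosing $t$ of order $\varepsilon_2/Q$ so that the $n_c$-weighted aggregate over-estimate stays below the gap $(1-1/e)\varepsilon_2\,OPT_F$ is precisely what converts the $\varepsilon_2/Q$ scale into the $Q^2/\varepsilon_2^2$ factor of $\theta_2$. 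With this $t$, the multiplicative Chernoff lower tail gives
\begin{align*}
\Pr\!\big[\pi_c \le (1-t)\,\theta_c q_c\big] \le \exp\!\Big(-\tfrac{\theta_c q_c t^2}{2}\Big) \le \exp\!\Big(-\tfrac{(1-b_0)\,\varepsilon_2^2\,\theta_c}{O(Q^2)}\Big),
\end{align*}
and setting $\theta_c \ge \theta_2$ makes each such probability at most $\delta_2/(C\binom{n_G}{k})$.

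Finally I would union-bound over the $C$ communities: if every $\pi_c$ stays above $(1-t)\,\theta_c q_c$, then $\hat F_\alpha(S,\cR)$ cannot reach the threshold, so the overall failure probability is at most $C\cdot\delta_2/(C\binom{n_G}{k}) = \delta_2/\binom{n_G}{k}$, which accounts for the factor $C$ inside the logarithm of $\theta_2$. The hypothesis $\theta\ge C\theta_2$ guarantees $\theta_c\ge\theta_2$ for the balanced allocation $\theta_c=\theta/C$, completing the argument. The two delicate points to verify carefully are the monotone-sensitivity step (bounding the $\hat g_c$ over-estimate by $O(Qt)$ uniformly, and controlling the Taylor truncation at $Q$), and checking that the $n_c$-weighted sum of the per-community over-estimates indeed stays below $(1-1/e)\varepsilon_2\,OPT_F$, which is where the normalization against $OPT_F$ must be tracked.
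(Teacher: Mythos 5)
Your overall architecture coincides with the paper's: fix a bad $S$, treat $\pi_c$ (the number of uncovered $c$-rooted RR sets) as the sufficient statistic, note that your $\hat g_c$ is decreasing in $\pi_c$, apply the multiplicative Chernoff lower tail to $\pi_c$ with exponent governed by $q_c = 1-\ut_c(S) \ge 1-b_0$, and take union bounds over the $C$ communities and the $\binom{n_G}{k}$ candidate sets. Your opening algebra is also correct: the bad-set hypothesis and $\varepsilon_2=\frac{e}{e-1}\varepsilon-\varepsilon_1$ do give the additive gap $(1-1/e)\varepsilon_2\, OPT_F$, and with $t=\Theta(\varepsilon_2/Q)$ the Chernoff exponent $\theta_c q_c t^2/2$ reproduces the claimed $\theta_2$ exactly.

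The genuine gap is the aggregation step you flag as delicate; it does not go through with $t = \Theta(\varepsilon_2/Q)$. Conditioned on $\pi_c \ge (1-t)\theta_c q_c$ for every $c$, your sensitivity bound controls each community's over-estimate by (up to the $-i$ and truncation corrections)
\begin{align*}
\hat g_c - \ut_c(S)^\alpha \;\le\; \alpha t \sum_{n\le Q} n\,\eta(n,\alpha)\,q_c^n \;\le\; Qt\cdot \alpha\sum_{n\ge 1}\eta(n,\alpha)\,q_c^n \;=\; Qt\bigl(1-\ut_c(S)^\alpha\bigr),
\end{align*}
i.e., it scales with the \emph{complementary} welfare $1-\ut_c(S)^\alpha$, not with $\ut_c(S)^\alpha$. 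The $n_c$-weighted aggregate over-estimate is therefore of order $Qt\,(n_G - F_\alpha(S))$, and your argument needs this to stay below $(1-1/e)\varepsilon_2\, OPT_F$. With $t = \Theta(\varepsilon_2/Q)$ this amounts to requiring $n_G - F_\alpha(S) = O(OPT_F)$, which fails in general because $F_\alpha$ is unnormalized: $OPT_F$ can be $o(n_G)$, e.g.\ when $k$ seeds can only reach a small fraction of the graph, so communities with $\ut_c \approx 0$ contribute $n_c$ to $n_G - F_\alpha(S)$ but almost nothing to $OPT_F$. Repairing this forces $t \approx \varepsilon_2\, OPT_F/\bigl(Q(n_G-F_\alpha(S))\bigr)$ and inflates $\theta_2$ by a factor of order $\bigl((n_G - F_\alpha(S))/OPT_F\bigr)^2$, so the stated bound is not recovered. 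The paper avoids precisely this by never converting to an additive gap against $OPT_F$: it first deduces that the target event implies the multiplicative event $\hat F_\alpha(S,\cR) \ge (1+\varepsilon_2)F_\alpha(S)$ (the slack being $\varepsilon\varepsilon_2\, OPT_F>0$), and since both sides are $n_c$-weighted sums, some single community must already satisfy $\hat g_c \ge (1+\varepsilon_2)\,\ut_c(S)^\alpha$ --- a per-community relative deviation that needs no normalization against $OPT_F$ --- which is then pushed through the $Q$-truncated series via the root substitution $1-\varepsilon_0=\sqrt[Q]{1-\varepsilon_2'}$ before invoking Chernoff. (The paper's route is not free of the small-utility issue you ran into either: its step $\varepsilon_2'\ge\varepsilon_2$ silently needs $\ut_c(S)^\alpha\ge 1/2$; but the multiplicative split at least localizes the difficulty to one community rather than a global $OPT_F$-versus-$n_G$ comparison.)
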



Please refer to Appendix for the detailed proof of Lemma~\ref{lem:delta1} and Lemma~\ref{lem:delta2}.

\begin{restatable}{theorem}{theoremimm}
\label{thm:imm}
	For every $\varepsilon > 0$, $\ell > 0$, $0<\alpha<1$, and $Q\geq2$, by setting $\delta_1=\delta_2 = 1/2n_G^\ell$ and $\theta\geq C\cdot max(\theta_1, \theta_2)$,
	the output $S$ of \alg{\FIMM} satisfies $F_\alpha(S) \geq \left(1 - 1/e - \varepsilon\right) F_\alpha(S^*)$, where $S^*$ denotes the optimal solution with probability at least $1-1/n_G^\ell$.
\end{restatable}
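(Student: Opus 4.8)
The plan is to follow the now-standard IMM-style argument, combining the submodular greedy guarantee on the empirical objective $\hat F_\alpha(\cdot,\cR)$ with the two concentration bounds in Lemma~\ref{lem:delta1} and Lemma~\ref{lem:delta2}. First I would observe that, since $\hat F_\alpha(\cdot,\cR)$ is monotone and submodular (Eq.~\ref{eq:fairunbiasob} inherits these properties, and \FIMM\ performs exactly the lazy greedy selection with respect to the marginal gain of Eq.~\ref{eq:Sgain}), the Nemhauser--Wolsey--Fisher guarantee~\cite{NWF78} applies to the sampled problem. Hence the output $S$ obeys $\hat F_\alpha(S,\cR)\ge(1-1/e)\max_{|T|\le k}\hat F_\alpha(T,\cR)\ge(1-1/e)\hat F_\alpha(S^*,\cR)$, the last step because $S^*$ is itself a feasible size-$k$ set.

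Next I would invoke Lemma~\ref{lem:delta1}: with probability at least $1-\delta_1$ (call this event $E_1$) we have $\hat F_\alpha(S^*,\cR)\ge(1-\varepsilon_1)\,OPT_F$, so under $E_1$ the output already satisfies $\hat F_\alpha(S,\cR)\ge(1-1/e)(1-\varepsilon_1)\,OPT_F=:\tau$. It then remains to rule out that $S$ is a \emph{bad} set, i.e.\ one with $F_\alpha(S)<(1-1/e-\varepsilon)\,OPT_F$. The key algebraic check is that the threshold $\tau$ sits above the bad-set value by exactly the margin controlled in Lemma~\ref{lem:delta2}: using $1-1/e=\tfrac{e-1}{e}$ together with $\varepsilon_2=\tfrac{e}{e-1}\varepsilon-\varepsilon_1$, one gets $\tau-(1-1/e-\varepsilon)\,OPT_F=(1-1/e)\,\varepsilon_2\,OPT_F$, so a bad set reaching $\tau$ forces its estimator to deviate upward by at least $(1-1/e)\varepsilon_2\,OPT_F$, precisely the event that Lemma~\ref{lem:delta2} bounds.

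I would then decouple the data-dependence of $S$ by a union bound. For each fixed bad $T$, Lemma~\ref{lem:delta2} gives $\Pr[\hat F_\alpha(T,\cR)\ge\tau]\le\delta_2/\binom{n_G}{k}$; summing over the at most $\binom{n_G}{k}$ candidate seed sets shows that, except on an event $\neg E_2$ of probability at most $\delta_2$, \emph{no} bad set reaches $\tau$. On $E_1\cap E_2$ the output satisfies $\hat F_\alpha(S,\cR)\ge\tau$ while no bad set can attain $\tau$, so $S$ cannot be bad, giving $F_\alpha(S)\ge(1-1/e-\varepsilon)\,F_\alpha(S^*)$. A final union bound yields total failure probability at most $\delta_1+\delta_2$, and substituting $\delta_1=\delta_2=1/(2n_G^\ell)$ with $\theta\ge C\cdot\max(\theta_1,\theta_2)$ (so that both lemmas' sample-size hypotheses are satisfied simultaneously) gives success probability at least $1-1/n_G^\ell$.

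The main obstacle is conceptual rather than computational: because the realized $S$ is correlated with $\cR$, one cannot bound $\Pr[\text{output is bad}]$ by a single concentration inequality applied to $S$ directly, and the union bound over all candidate sets is what legitimately decouples this. I also expect the delicate part to be the bookkeeping of $\varepsilon_1$ and $\varepsilon_2$, namely ensuring $\varepsilon_2>0$ (which requires $\varepsilon_1<\tfrac{e}{e-1}\varepsilon$) and that the two thresholds align through the identity above. The genuinely heavy lifting, the concentration of the fractional-power estimator $\hat F_\alpha$ whose summands are products of the form in Lemma~\ref{lem:estinter} rather than sums of bounded i.i.d.\ indicators, already resides in Lemma~\ref{lem:delta1} and Lemma~\ref{lem:delta2}, which I may assume here.
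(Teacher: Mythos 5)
Your proposal is correct and takes essentially the same route as the paper: the paper's own proof just combines Lemma~\ref{lem:delta1} and Lemma~\ref{lem:delta2} via a union bound with $\delta_1=\delta_2=1/2n_G^\ell$ and defers the rest to ``the standard analysis of IMM,'' which is precisely the argument you spell out (greedy $(1-1/e)$ guarantee on $\hat F_\alpha(\cdot,\cR)$, the threshold identity $(1-1/e)(1-\varepsilon_1)\,OPT_F-(1-1/e-\varepsilon)\,OPT_F=(1-1/e)\varepsilon_2\,OPT_F$, and the union bound over all $\binom{n_G}{k}$ bad sets to decouple $S$ from $\cR$). Your writeup simply makes explicit the standard steps the paper leaves implicit, including the requirement $\varepsilon_1<\frac{e}{e-1}\varepsilon$ so that $\varepsilon_2>0$.
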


\begin{proof}
Combining Lemma~\ref{lem:delta1} and Lemma~\ref{lem:delta2}, we have $\hat F_\alpha(S, \cR) \geq (1-1/e-\varepsilon)\cdot OPT_F$ at least $1-\delta_1-\delta_2$ probability based on the union bound.
If we set $\delta_1=\delta_2 = 1/2n_G^\ell$,
then, following the standard analysis of IMM, our {\FIMM} algorithm provides ($1 - 1/e - \varepsilon$)-approximation with probability at least $1-1/n_G^\ell$.
\end{proof}

If we set $\delta_1=\delta_2 = \frac{1}{2n_G^\ell}$ and $\varepsilon_1 = \varepsilon \cdot \frac{e}{e-1}\cdot \frac{\sqrt{3}\tau_1}{\sqrt{3}\tau_1 + \sqrt{2}\tau_2}$ where $\tau_1 = \sqrt{\ln{C}+\ell\ln{n_G}+\ln{2}}$ and $\tau_2^2 = \tau_1^2+\ln{\binom{n_G}{k}}$, then a possible setting of $\theta$ could be $\theta = (\frac{e-1}{e})^2 \cdot \frac{4CQ^2(\sqrt{3}\tau_1 + \sqrt{2}\tau_2)^2}{\varepsilon^2(1-b_0)}$.

\section{Experiments}\label{sec:exp}

\subsection{Dataset}

{\parindent0pt
\paragraph*{\textbf{Email}}
The Email dataset \cite{yin17local} is generated using email data from a large European research institution, where every node is a member of the research institution and an directed edge $(v,u)$ indicates that $v$ has sent $u$ at least one email.
It contains 1,005 nodes and 25,571 directed edges.
Moreover, this dataset also contains "ground-truth" community memberships of nodes, where each member belongs to exactly one of 42 departments at the research institute.
}

{\parindent0pt
\paragraph*{\textbf{Flixster}}
The Flixster dataset \cite{barbieri2012topic} is a network of American social movie discovery services. 
To transform the dataset into a weighted graph, each user is represented by a node, and a directed edge from node $u$ to $v$ is formed if $v$ rates one movie shortly after $u$ does so on the same movie.
It contains 29,357 nodes and 212,614 directed edges. 
It also provides the learned influence probability between each node pair, which can be incorporated into the IC model.
Since it has no community information, we construct the biased community structure by categorizing individuals according to their susceptibility of being influenced to highlight the level of inequality and get 100 communities.
Moreover, this dataset contains the learned influence probability between each node pair, which can be incorporated into the IC model.
}

{\parindent0pt
\paragraph*{\textbf{Amazon}}
The Amazon dataset~\cite{Yang12comm} is collected based on {\it Customers Who Bought This Item Also Bought} feature of the Amazon website.
If a product $i$ is frequently co-purchased with product $j$, the graph contains an undirected edge between $i$ to $j$. 
The dataset also provides the ground-truth community structure which indicates the product categories.
The original network has 334,863 nodes and 925,872 undirected edges.
After Pruning low-quality communities (whose size is no more than 10 nodes), the Amazon network tested in our experiments 12,698 nodes, 40,096 edges, and 509 communities.
}

{\parindent0pt
\paragraph{\textbf{Youtube}}
The Youtube dataset~\cite{Yang12comm} is a network of the video-sharing web site that includes social relationships.
Users form friendship each other and users can create groups which other users can join. 
The friendship between users is regarded as undirected edges and the user-defined groups are considered as ground-truth communities.
The original network has 1,134,890 nodes and 2,987,624 undirected edges.
After screening high-quality communities, it remains 30,696 nodes, 198,867 edges, and 1,157 communities.
}

{\parindent0pt
\paragraph*{\textbf{DBLP}}
The DBLP dataset~\cite{Yang12comm} is the co-authorship network where two authors are connected if they ever published a paper together. 
Publication venues, such as journals or conferences, defines an individual ground-truth community and authors who published to a certain journal or conference form a community.
The original network has 717,080 nodes and 1,049,866 undirected edges.
We also perform the network pruning and finally obtain 72,875 nodes, 268,346 edges, and 1,352 communities.
}

\subsection{Evaluation}

Let $S_I$ denote the seed set returned by IMM~\cite{tang15}, $S_F$ denote the seed set returned by {\FIMM}, the performance of $S_F$ towards fairness can be evaluated via the Price of Fairness (PoF) and the Effect of Fairness (EoF) as
\begin{align}
PoF = \frac{\sigma(S_I) - \sigma(S_F)}{\sigma(S_I) - k},
EoF = \left( \frac{F_\alpha(S_F) - F_\alpha(S_I)}{F_\alpha(S_I) - k} \right)^\alpha, \nonumber
\end{align}
where $|S_I|=|S_F|=k$, $\sigma(\cdot)$ denotes the influence spread and $F_\alpha(\cdot)$ denotes the fair influence.

Intuitively, $PoF$ implies how much price it cost to access fairness and $EoF$ implies to what extent it steps towards fairness.

\subsection{Results}

We test {\IMM} and our proposed {\FIMM} algorithm in the experiment.
In all tests, we run 10,000 Monte-Carlo simulations to evaluate both the influence spread and the fair influence under IC model.
We also test influence probability $p$, inequality aversion parameter $\alpha$ and the seed budget $k$.

\subsubsection{Email \& Flixster}

\begin{figure}[!h]
\centering
        \subfloat[Email]{
	\includegraphics[width=5cm]{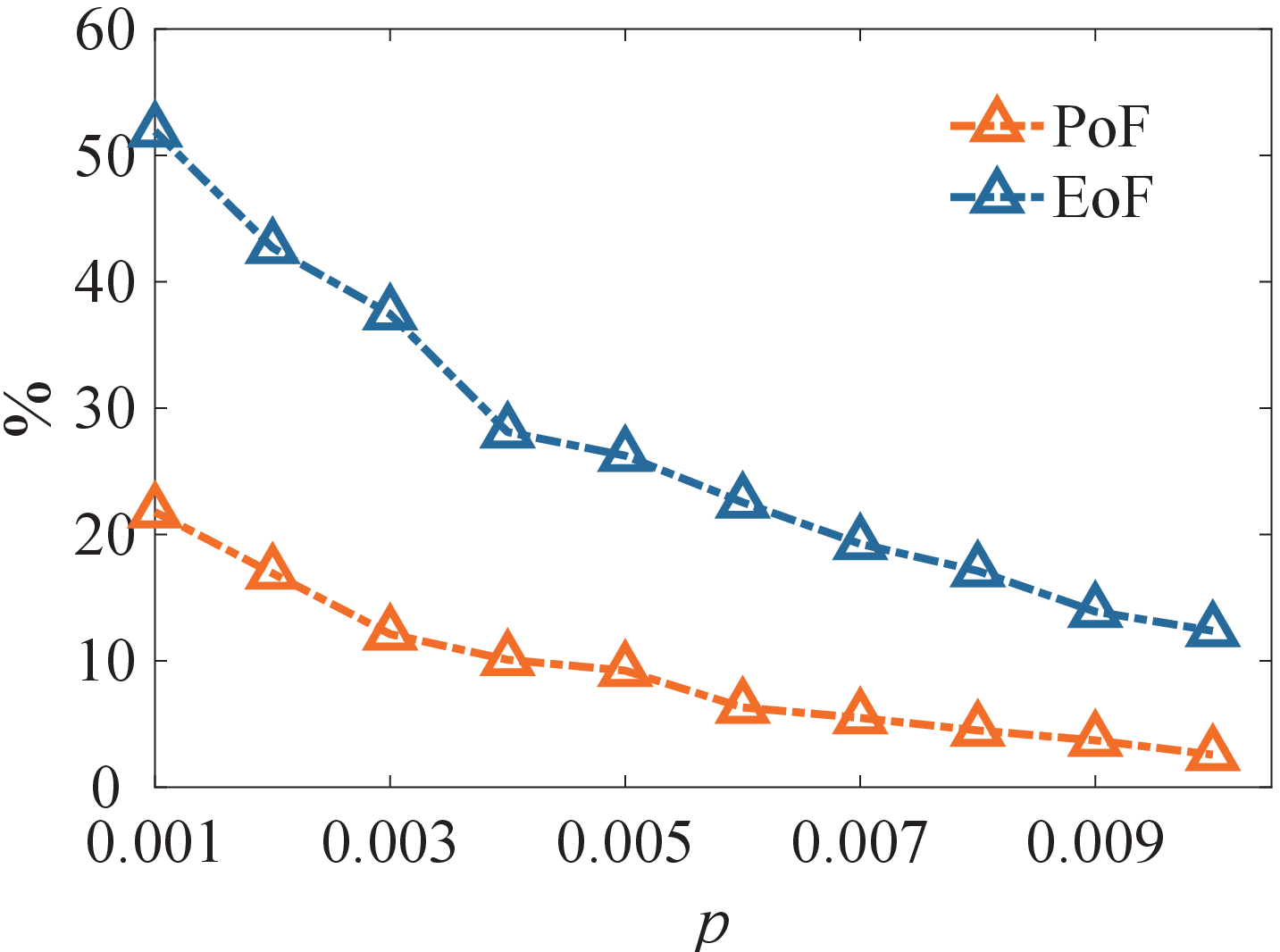}}\ \ \
        \hspace{0.5cm}
        \subfloat[Flixster]{
	\includegraphics[width=5cm]{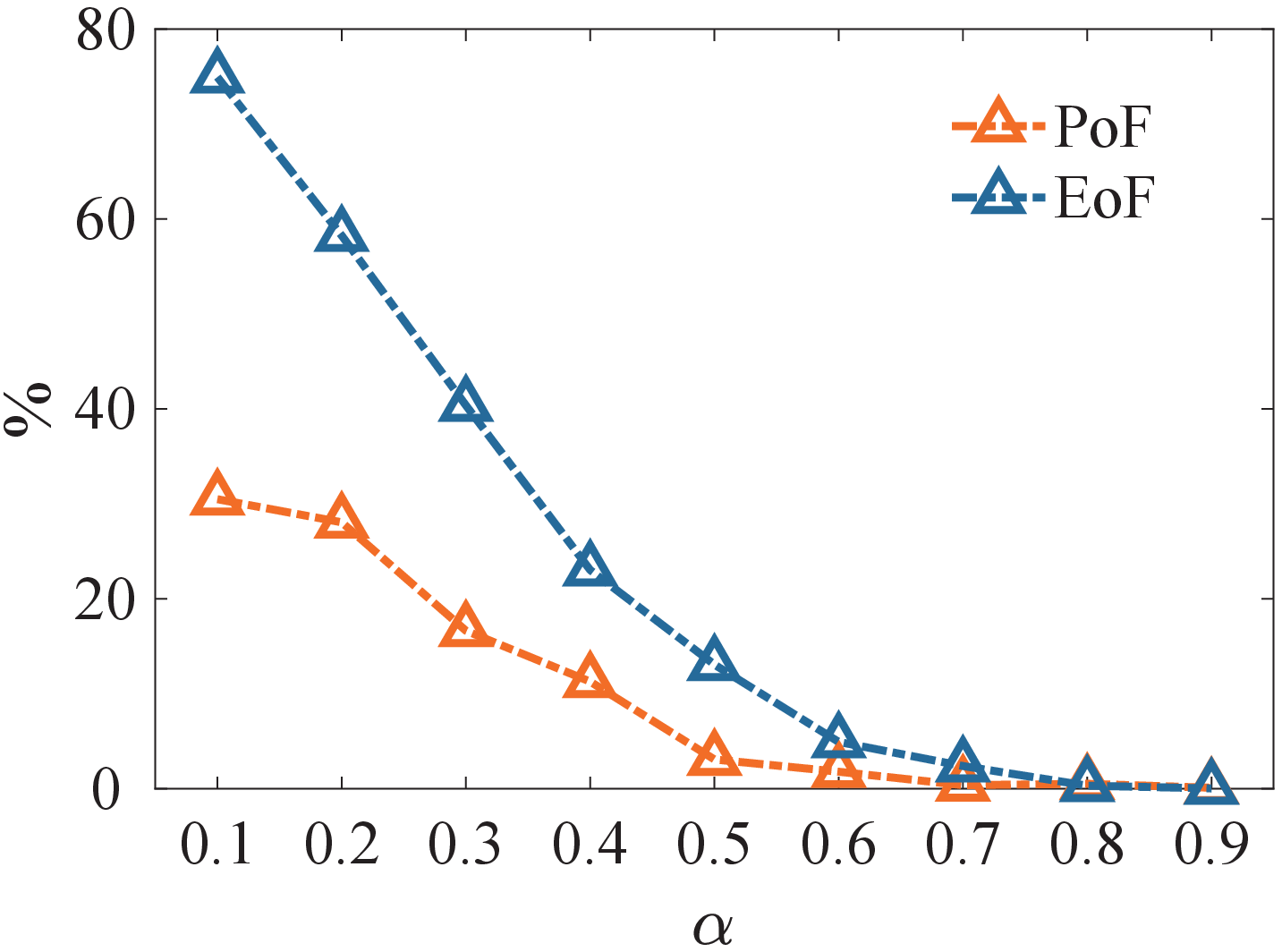}}\ \ \ 
\caption{Results of testing influence probability $p$ on Email and aversion parameter $\alpha$ on Flixster.}
\label{fig:pa}
\end{figure}

For the Email network, we set $\alpha = 0.5$. 
Since the network is small, we apply the Uniformed IC model where the influence probability is the same across all edges.
We test different probabilities that range from 0.001 to 0.01 with the step of 0.001.
For the Flixster network, we test the inequality aversion parameter $\alpha$ which ranges from 0.1 to 0.9 with the step of 0.1.
We set $k=50$ for both networks and the results are shown in Figure~\ref{fig:pa}.

As the influence probability $p$ increases, both $PoF$ and $EoF$ show a downward trend. 
This may be attributed to the increased challenges faced by disadvantaged communities in being influenced when $p$ is small.
Similarly, both $PoF$ and $EoF$ also show a downward trend with the increase of the aversion parameter $\alpha$.
The reason lies that communities experience greater promotions in fair influence when the aversion parameter is smaller, resulting in higher $EoF$ and $PoF$.
Moreover, there is hardly any fairness when $\alpha \geq 0.7$ where the gap between $\ut^\alpha$ and $\ut$ is just too small.

\subsubsection{Amazon, Youtube \& DBLP}

For Amazon, Youtube, and DBLP networks, we set $\alpha=0.5$ and $p(v_i,v_j)=1/d_{in}(v_j)$ where $d_{in}$ denotes the in-degree as the influence probability following the weighted IC model~\cite{kempe03}.
We test different seed budget $k$ that ranges from 5 to 50 with the step of 5.
Results are shown in Figure~\ref{fig:k}.

\begin{figure}[!h]
    \centering
    \subfloat[Amazon]{
    \includegraphics[width=4.4cm]{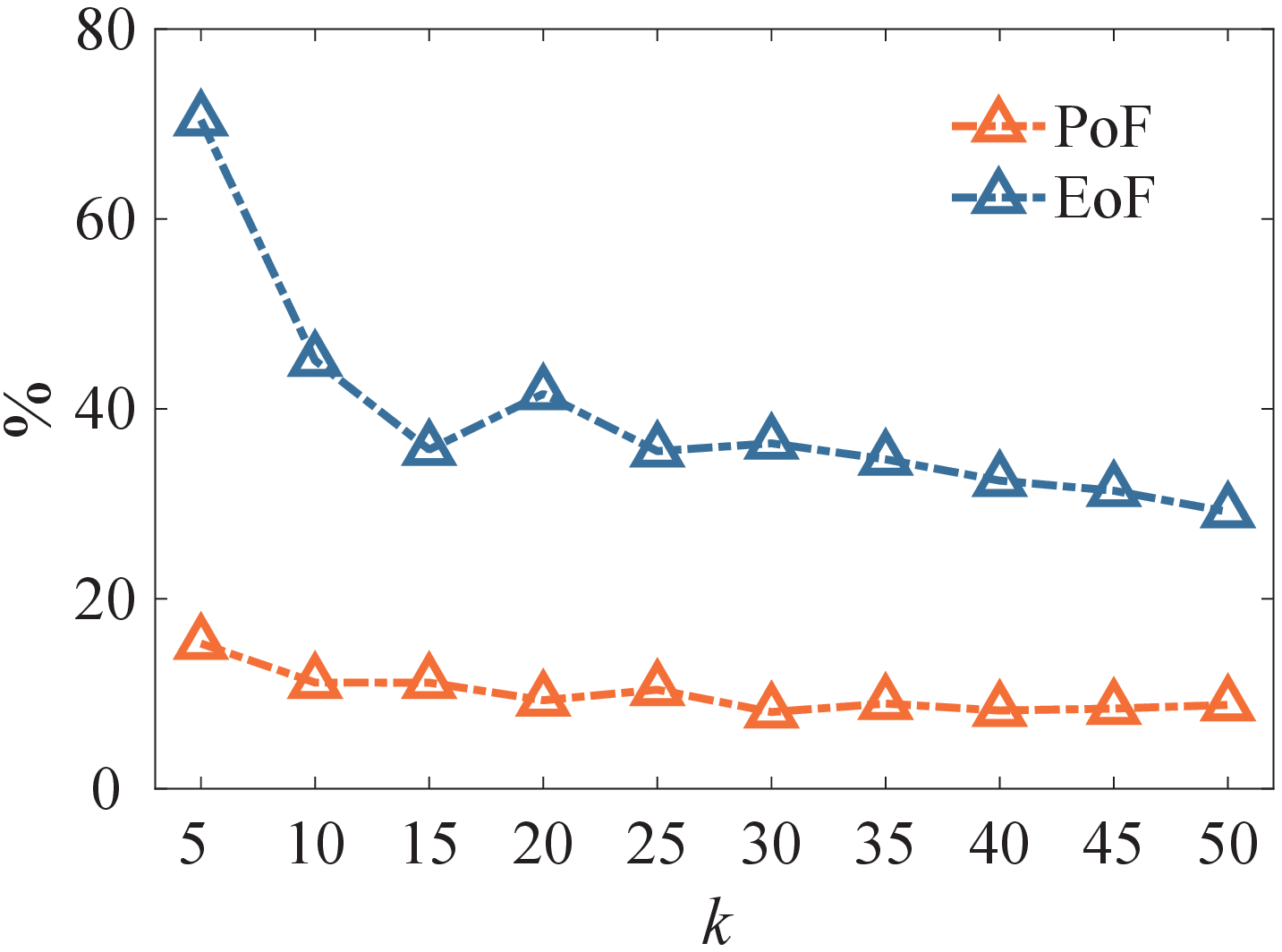}}\ \ \ 
    \subfloat[Youtube]{
    \includegraphics[width=4.4cm]{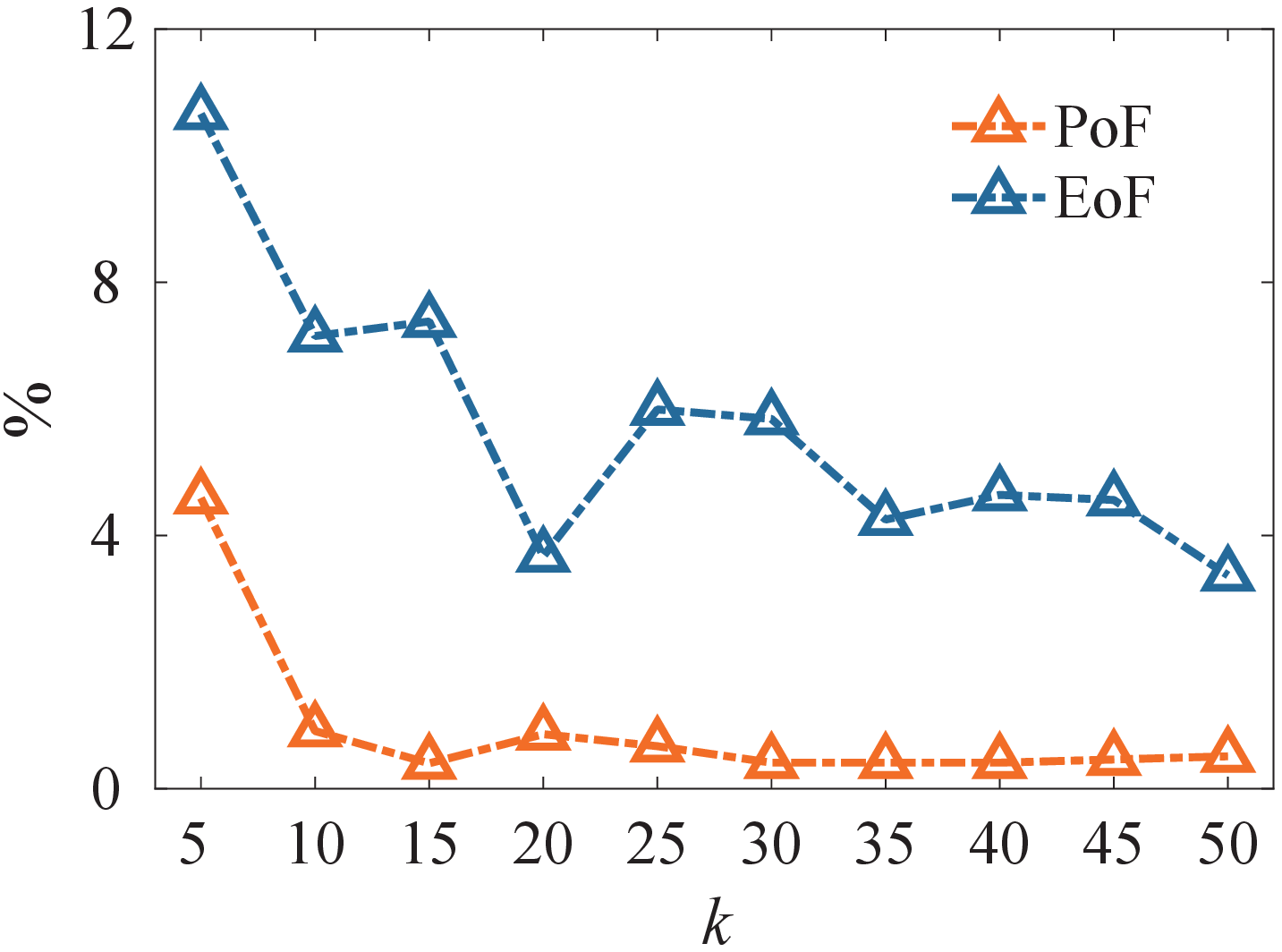}}\ \ \
    \subfloat[DBLP]{
    \includegraphics[width=4.4cm]{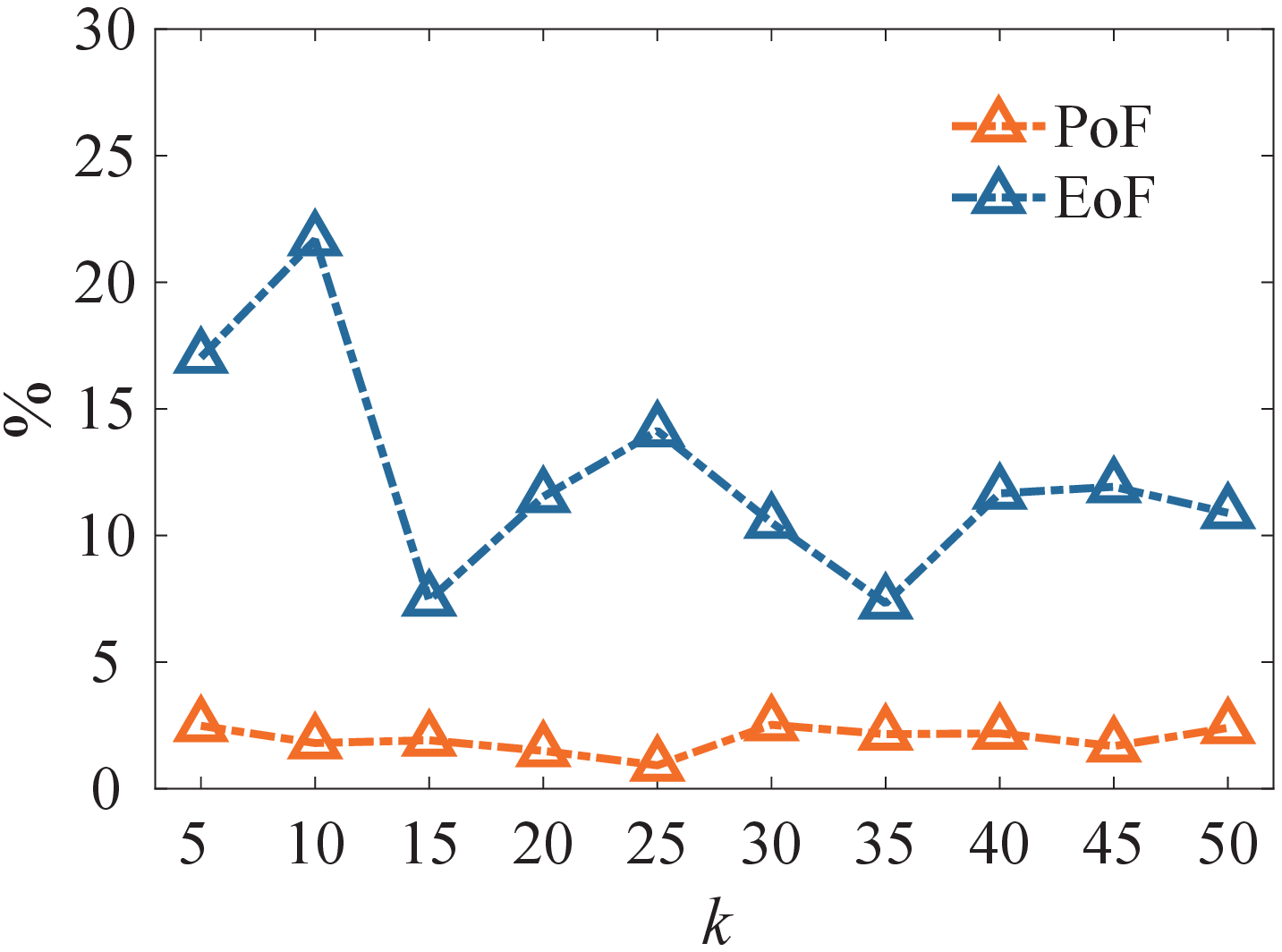}}\ \ \
    \caption{Results of testing seed budget $k$ on Amazon, Youtube, and DBLP.}
\label{fig:k}
\end{figure}

Generally, {\FIMM} tends to produce a noticeably fairer output when $k$ is small.
It reflects the idea that enforcing fairness as a constraint becomes easier when there is an abundance of resources available.
However, there are also some exceptions where smaller $k$ leads to a lower EoF, {\em e.g.}, $k=20$ Figure~\ref{fig:k}(b) and $k=5,15$ in Figure~\ref{fig:k}(c).
This may be attributed to the fact that the seed selection in FIMM follows a pattern of remedying the previously fair solutions in each round.
\section{Conclusion}

This paper focuses on the fair influence maximization problem with efficient algorithms.
We first tackle the challenge of carrying out the unbiased estimation of the fractional power of the expected proportion of activated nodes in each community.
Then, we deal with the challenge of integrating unbiased estimation into the RIS framework and propose an ($1-1/e-\varepsilon$) approximation algorithm {\FIMM}.
We further give a theoretical analysis that addresses the concentration of the unbiased estimator of the fractional power.
The experiments validate that our algorithm is both scalable and effective, which is consistent with our theoretical analysis.

There are several future directions from this research. 
One direction is to find some other unbiased estimators for the fair influence that would be easier to calculate through RIS.
Another direction is to explore a more efficient seed selection strategy.
The fairness bound is also a meaningful research direction.

{\parindent0pt
\paragraph*{\textbf{Limitation}}
The limitations of our work are mainly two points. The first limitation is that our algorithm can only be efficient when the number of communities is small (e.g., a constant). 
The second limitation is that our algorithm is based on the assumption that the minimal community utility of an arbitrary seed set would not be too big.
}





\clearpage
\bibliographystyle{unsrt}
\bibliography{sample-base}

\appendix
\clearpage
\section*{Appendix}

\section{Proofs}

\begin{restatable}{fact}{Chernoff2} \label{fact:Chernoff2}
	(Chernoff bound) Let $X_1, X_2, \ldots, X_R$ be $R$ independent random variables with $X_i$ having range [0, 1], and there exists $\mu \in [0, 1]$ making $\E[X_i] = \mu$ for any $i \in [R]$. Let $Y = \sum_{i=1}^R X_i$, for any $\gamma> 0$, 
	\begin{equation*}
	\Pr\{Y - t\mu \geq \gamma \cdot t\mu\} \leq \exp(-\frac{\gamma^2}{2 + \frac{2}{3}\gamma}t\mu).
	\end{equation*}
	For any $0 < \gamma < 1$, 
	\begin{equation*}
	\Pr\{Y - t\mu \leq - \gamma \cdot t\mu\} \leq \exp(-\frac{\gamma^2}{2}t\mu).
	\end{equation*}
\end{restatable}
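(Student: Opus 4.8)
The plan is to prove both tail bounds by the standard exponential-moment (Chernoff) method, treating $t$ as the number of summands $R$ so that $\E[Y] = t\mu$. The event $\{Y - t\mu \geq \gamma t\mu\}$ is $\{Y \geq (1+\gamma)t\mu\}$, and symmetrically the lower-tail event is $\{Y \leq (1-\gamma)t\mu\}$.

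First I would handle the upper tail. For any $s > 0$, Markov's inequality applied to $e^{sY}$ gives
\begin{equation*}
\Pr\{Y \geq (1+\gamma)t\mu\} = \Pr\{e^{sY} \geq e^{s(1+\gamma)t\mu}\} \leq e^{-s(1+\gamma)t\mu}\,\E[e^{sY}].
\end{equation*}
By independence of the $X_i$, the moment generating function factors as $\E[e^{sY}] = \prod_{i=1}^t \E[e^{sX_i}]$. To bound each factor I would use convexity of $x \mapsto e^{sx}$ on $[0,1]$, which yields $e^{sx} \leq 1 + (e^s - 1)x$ for $x \in [0,1]$; taking expectations and using $\E[X_i] = \mu$ together with $1 + y \leq e^y$ gives $\E[e^{sX_i}] \leq 1 + \mu(e^s - 1) \leq \exp(\mu(e^s-1))$. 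Hence $\E[e^{sY}] \leq \exp(t\mu(e^s-1))$, and substituting back produces $\Pr\{Y \geq (1+\gamma)t\mu\} \leq \exp\big(t\mu(e^s - 1 - s(1+\gamma))\big)$.

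Next I would optimize the free parameter by choosing $s = \ln(1+\gamma)$, the minimizer of the exponent, which collapses the bound to the classical form $\Pr\{Y \geq (1+\gamma)t\mu\} \leq \left( e^\gamma/(1+\gamma)^{1+\gamma} \right)^{t\mu}$. The lower tail is symmetric: for $s > 0$ I apply Markov to $e^{-sY}$, bound $\E[e^{-sX_i}] \leq \exp(\mu(e^{-s}-1))$ in the same way, and optimize at $s = -\ln(1-\gamma)$ (valid for $0 < \gamma < 1$) to obtain $\Pr\{Y \leq (1-\gamma)t\mu\} \leq \left( e^{-\gamma}/(1-\gamma)^{1-\gamma} \right)^{t\mu}$.

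The remaining — and genuinely most delicate — step is to replace these sharp but unwieldy expressions by the clean exponential bounds in the statement, i.e. to verify the scalar inequalities
\begin{equation*}
\frac{e^\gamma}{(1+\gamma)^{1+\gamma}} \leq \exp\!\left(-\frac{\gamma^2}{2 + \frac{2}{3}\gamma}\right) \ (\gamma > 0), \qquad \frac{e^{-\gamma}}{(1-\gamma)^{1-\gamma}} \leq \exp\!\left(-\frac{\gamma^2}{2}\right) \ (0<\gamma<1),
\end{equation*}
after which raising both sides to the power $t\mu$ yields the theorem. I would establish each scalar inequality by taking logarithms and analyzing the resulting one-variable function: for the upper tail, setting $g(\gamma) = \gamma - (1+\gamma)\ln(1+\gamma) + \frac{\gamma^2}{2+2\gamma/3}$ and showing $g(\gamma) \leq 0$ for $\gamma > 0$, either via the series $(1+\gamma)\ln(1+\gamma) - \gamma = \frac{\gamma^2}{2} - \frac{\gamma^3}{6} + \frac{\gamma^4}{12} - \cdots$ compared term by term against the expansion of $\frac{\gamma^2}{2+2\gamma/3}$, or by a direct derivative argument using $g(0)=0$ and monotonicity; the lower-tail inequality is handled analogously using $-(1-\gamma)\ln(1-\gamma) \leq \gamma - \frac{\gamma^2}{2}$. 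This calculus verification is where the main work lies, since the tightness of the $2 + \frac{2}{3}\gamma$ denominator must be respected and the alternating-sign series demands care.
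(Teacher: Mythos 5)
Your proof is correct, and there is nothing in the paper to compare it against: the statement is recorded as Fact~\ref{fact:Chernoff2} and invoked as a known result (the multiplicative Chernoff--Bernstein bound for independent $[0,1]$-valued variables with common mean), with no proof supplied in the paper. Your derivation is the standard exponential-moment argument and is complete in its essentials: you correctly resolve the paper's notational wrinkle ($t$ versus $R$), your MGF step via $e^{sx}\le 1+x(e^s-1)$ for $x\in[0,1]$ correctly covers general $[0,1]$-valued variables rather than only Bernoulli ones (which matters here, since the $X_c^i$-type variables the paper applies this to are indicators, but the Fact is stated for the general range), the optimizing choices $s=\ln(1+\gamma)$ and $s=-\ln(1-\gamma)$ are right, and you correctly identify that everything reduces to the two scalar inequalities you display.

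One caution on how you close the upper-tail scalar inequality $(1+\gamma)\ln(1+\gamma)-\gamma\ge \frac{\gamma^2}{2+2\gamma/3}$: the series-comparison option you offer is only available for $0<\gamma\le 1$, because the expansion $(1+\gamma)\ln(1+\gamma)-\gamma=\frac{\gamma^2}{2}-\frac{\gamma^3}{6}+\frac{\gamma^4}{12}-\cdots$ diverges for $\gamma>1$, while the Fact asserts the bound for all $\gamma>0$. So for a complete proof you must commit to your second option, the derivative argument, which does close cleanly: setting $h(\gamma)=(1+\gamma)\ln(1+\gamma)-\gamma-\frac{3\gamma^2}{6+2\gamma}$ one checks $h(0)=h'(0)=0$ and
\begin{equation*}
h''(\gamma)=\frac{1}{1+\gamma}-\frac{216}{(6+2\gamma)^3}\ \ge\ 0,
\end{equation*}
the last inequality being equivalent to $(3+\gamma)^3\ge 27(1+\gamma)$, i.e.\ $9\gamma^2+\gamma^3\ge 0$, so $h$ is nonnegative on $(0,\infty)$. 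The lower-tail inequality is safe by series since there $0<\gamma<1$: from $(1-\gamma)\ln(1-\gamma)=-\gamma+\sum_{n\ge 2}\frac{\gamma^n}{n(n-1)}$, the claim $-(1-\gamma)\ln(1-\gamma)\le\gamma-\frac{\gamma^2}{2}$ amounts to $\sum_{n\ge 2}\frac{\gamma^n}{n(n-1)}\ge\frac{\gamma^2}{2}$, which holds because the $n=2$ term alone equals $\frac{\gamma^2}{2}$ and all remaining terms are nonnegative. With the derivative argument substituted for the series sketch in the upper tail, your proof is a correct and self-contained justification of the Fact.
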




\lemdeltaone*
\begin{proof}
Let $X_c^i$ be the random variable for each $R_i\in \cR$ ($R_i$ rooted in $c$), such that $X_c^i = 1$ if $S^* \cap R_c(i) \neq \varnothing$, and $X_c^i = 0$ otherwise. 
Let $\pi_c = \theta_c - \sum_{i\in[\theta_c]} X_c^i$.
\begin{align}
    & Pr\left\{ \hat F_\alpha(S^*, \cR) < (1-\varepsilon_1)\cdot OPT_F \right\} \nonumber \\
    & = Pr\left\{ \sum_{c\in \cC} n_c\left( 1 - \alpha \sum_{n=1}^\infty \eta(n,\alpha)\frac{(\theta_c-n)!}{\theta_c!}\left\{ \sum \overline{X_c^{i_1}}\cdot\overline{X_c^{i_2}} \cdots \overline{X_c^{i_n}} \right\} \right) < (1-\varepsilon_1)\cdot F_\alpha(S^*) \right\} \nonumber \\
    & = Pr\left\{ \sum_{c\in \cC} n_c\left( 1 - \alpha \sum_{n=1}^\infty \eta(n,\alpha)\frac{(\theta_c-n)!}{\theta_c!}\left\{ \sum \overline{X_c^{i_1}}\cdot\overline{X_c^{i_2}} \cdots \overline{X_c^{i_n}} \right\} \right) < (1-\varepsilon_1) \sum_{c\in \cC}n_c \left(\ut_c(S^*) \right)^\alpha \right\} \nonumber \\
    & = Pr\left\{ \sum_{c\in \cC} n_c\left( 1 - \alpha \sum_{n=1}^\infty \eta(n,\alpha) \prod_{i=0}^{n-1} \frac{\pi_c-i}{\theta_c-i} \right) < (1-\varepsilon_1) \sum_{c\in\cC} n_c\left( 1 - \alpha \sum_{n=1}^\infty \eta(n,\alpha)\big(1-\ut_c(S^*) \big)^n  \right) \right\} \nonumber \\
    & \leq 1 - \prod_{c\in\cC}\bigg( 1 - Pr\left\{ 1 - \alpha \sum_{n=1}^\infty \eta(n,\alpha) \prod_{i=0}^{n-1} \frac{\pi_c-i}{\theta_c-i} < (1-\varepsilon_1)\left( 1 - \alpha \sum_{n=1}^\infty \eta(n,\alpha)\big(1-\ut_c(S^*) \big)^n  \right) \right\}\bigg) \label{eq:fopt-all}
\end{align}

For each community $c$, let $\varepsilon_1' = \frac{1 - \alpha \sum_{n=1}^\infty \eta(n,\alpha)\big(1-\ut_c(S^*) \big)^n}{\alpha\sum_{n=1}^\infty \eta(n,\alpha)\big(1-\ut_c(S^*) \big)^n} \cdot \varepsilon_1$, thus $\varepsilon_1' \geq \varepsilon_1$ when $\alpha\leq 1/2$, and
\begin{align}
    & Pr\left\{ 1 - \alpha \sum_{n=1}^\infty \eta(n,\alpha)\prod_{i=0}^{n-1} \frac{\pi_c-i}{\theta_c-i} < (1-\varepsilon_1)\left( 1 - \alpha \sum_{n=1}^\infty \eta(n,\alpha)\big(1-\ut_c(S^*) \big)^n  \right) \right\} \nonumber \\
    & = Pr\left\{ 1 - \alpha \sum_{n=1}^\infty \eta(n,\alpha) \prod_{i=0}^{n-1}\frac{\pi_c-i}{\theta_c-i} < 1 - \alpha \sum_{n=1}^\infty \eta(n,\alpha)\big(1-\ut_c(S^*) \big)^n - \varepsilon_1 \left(1 - \alpha \sum_{n=1}^\infty \eta(n,\alpha)\big(1-\ut_c(S^*) \big)^n \right) \right\} \nonumber \\
    & = Pr\left\{ \alpha \sum_{n=1}^\infty \eta(n,\alpha) \prod_{i=0}^{n-1}\frac{\pi_c-i}{\theta_c-i} > \alpha \sum_{n=1}^\infty \eta(n,\alpha)\big(1-\ut_c(S^*) \big)^n + \varepsilon_1 \left(1 - \alpha \sum_{n=1}^\infty \eta(n,\alpha)\big(1-\ut_c(S^*) \big)^n \right) \right\} \nonumber \\
    & = Pr\left\{ \sum_{n=1}^{\infty} \eta(n,\alpha) \prod_{i=0}^{n-1}\frac{\pi_c-i}{\theta_c-i} > (1+\varepsilon_1') \sum_{n=1}^\infty \eta(n,\alpha)\big(1-\ut_c(S^*) \big)^n \right\} \nonumber \\
    & \leq Pr\left\{ \sum_{n=1}^{\pi_c} \eta(n,\alpha) \prod_{i=0}^{n-1}\frac{\pi_c-i}{\theta_c-i} > (1+\varepsilon_1') \sum_{n=1}^{\pi_c} \eta(n,\alpha)\big(1-\ut_c(S^*) \big)^n \right\} \label{eq:fopt-cori} \\
    & \leq Pr\left\{ \sum_{n=1}^{\pi_c} \eta(n,\alpha) (\frac{\pi_c}{\theta})^n > (1+\varepsilon_1') \sum_{n=1}^{\pi_c} \eta(n,\alpha)\big(1-\ut_c(S^*) \big)^n \right\} \nonumber\\
    & \leq 1 - Pr\left\{ (\frac{\pi_c}{\theta})^{\pi_c} < (1+\varepsilon_1') \big(1-\ut_c(S^*) \big)^{\pi_c} \right\} \nonumber \\
    & = Pr\left\{ (\frac{\pi_c}{\theta})^{\pi_c} \geq (1+\varepsilon_1') \big(1-\ut_c(S^*) \big)^{\pi_c} \right\} \nonumber \\
    & \left(\text{Let } 1+\varepsilon_0 = \sqrt[\pi_c]{1+\varepsilon_1'} \right) \nonumber\\
    & = Pr\left\{ \frac{\pi_c}{\theta} \geq (1+\varepsilon_0) \big(1-\ut_c(S^*) \big) \right\} \nonumber \\
    & = \Pr\left\{ \pi_c- \theta_c\big(1-\ut_c(S^*) \big) \geq \varepsilon_0 \theta_c\big(1-\ut_c(S^*) \big) \right\} \nonumber \\
    & \leq exp\left( -\frac{\varepsilon_0^2}{3} \theta_c \big(1 - \ut_c(S^*)\big)  \right) \label{eq:fopt-c}
\end{align}

Since $0\leq \frac{\epsilon}{2x} \leq \sqrt[x]{1+\epsilon}-1 \leq \frac{\epsilon}{x}$ for $0 \leq \epsilon \leq 1$ and $x \geq 1$, it holds $\frac{2x}{\epsilon}\geq \frac{1}{\sqrt[x]{1+\epsilon}-1}$.
Let $\theta_c \geq  \frac{12\pi_c^2\ln(C/\delta_1)}{\varepsilon_1^2 \big(1-\ut_c(S^*) \big)} \geq \frac{3\ln(C/\delta_1)}{\big(\sqrt[\pi_c]{1+\varepsilon_1'} -1 \big)^2 \big(1-\ut_c(S^*) \big)} = \frac{3\ln(C/\delta_1)}{\varepsilon_0^2 \big(1-\ut_c(S^*) \big)}$, then

\begin{align}
    Eq.~\ref{eq:fopt-c} & = exp\left( -\frac{\varepsilon_0^2}{3} \theta_c \big(1 - \ut_c(S^*)\big)  \right) \nonumber \\
    & \leq exp\left( -\frac{\varepsilon_0^2}{3} \frac{3\ln(C/\delta_1)}{\varepsilon_0^2 \big(1-\ut_c(S^*) \big)} \big(1-\ut_c(S^*) \big)  \right) \nonumber \\
    & = \delta_1/C \label{eq:fopt-C}
\end{align}

Therefore,
\begin{align}
    Eq.(\ref{eq:fopt-all}) \leq 1 - \prod_{c\in\cC}(1-\delta_1/C) \leq \delta_1
\end{align}

To limit Eq.(\ref{eq:fopt-cori}) to the first $Q$ ($Q\geq 2$) terms, Eq.(\ref{eq:fopt-cori}) becomes 
\begin{align}
    & Pr\left\{ \sum_{n=1}^Q \eta(n,\alpha) \prod_{i=0}^{n-1}\frac{\pi_c-i}{\theta_c-i} > (1+\varepsilon_1') \sum_{n=1}^{\pi_c} \eta(n,\alpha)\big(1-\ut_c(S^*) \big)^n \right\} \nonumber\\
    & \leq Pr\left\{ \sum_{n=1}^Q \eta(n,\alpha) (\frac{\pi_c}{\theta})^n > (1+\varepsilon_1') \sum_{n=1}^{\pi_c} \eta(n,\alpha)\big(1-\ut_c(S^*) \big)^n \right\} \nonumber\\
    & \leq 1 - Pr\left\{ (\frac{\pi_c}{\theta})^Q < (1+\varepsilon_1') \big(1-\ut_c(S^*) \big)^Q \right\} \nonumber \\
    & \leq Pr\left\{ (\frac{\pi_c}{\theta})^Q \geq (1+\varepsilon_1') \big(1-\ut_c(S^*) \big)^Q \right\} \nonumber \\
    & \left(\text{Let } 1+\varepsilon_0 = \sqrt[Q]{1+\varepsilon_1'} \right) \nonumber\\
    & = \Pr\left\{ \pi_c- \theta_c\big(1-\ut_c(S^*) \big) \geq \varepsilon_0 \theta_c\big(1-\ut_c(S^*) \big) \right\} \nonumber \\
    & \leq exp\left( -\frac{\varepsilon_0^2}{3} \theta_c \big(1 - \ut_c(S^*)\big)  \right) \label{eq:fopt-Q} \\
    & \left( \text{Let } \theta_c \geq  \frac{12Q^2\ln(C/\delta_1)}{\varepsilon_1^2 \big(1-\ut_c(S^*) \big)} \geq \frac{3\ln(C/\delta_1)}{\big(\sqrt[Q]{1+\varepsilon_1'} -1 \big)^2 \big(1-\ut_c(S^*) \big)} = \frac{3\ln(C/\delta_1)}{\varepsilon_0^2 \big(1-\ut_c(S^*) \big)} \right) \nonumber\\
    & \leq \delta_1/C
\end{align}

Therefore,
\begin{align}
    Eq.(\ref{eq:fopt-all}) \leq \delta_1
\end{align}
It indicates $Pr\left\{ \hat F_\alpha(S^*, \cR) < (1-\varepsilon_1)\cdot OPT_F \right\} \geq 1-\delta_1$, thus concludes the proof.
\end{proof}

\lemdeltatwo*
\begin{proof}
Let $X_c^i$ be the random variable for each $R_i\in \cR$ ($R_i$ rooted in $c$), such that $X_c^i = 1$ if $S \cap R_c(i) \neq \varnothing$, and $X_c^i = 0$ otherwise. 
Let $\pi_c = \theta_c - \sum_{i\in[\theta_c]} X_c^i$.
\begin{align}
    & Pr\left\{ \hat F_\alpha(S, \cR) \geq (1-\frac{1}{e})(1-\varepsilon_1)\cdot OPT_F \right\} \nonumber \\
    & \leq Pr\left\{ \hat F_\alpha(S, \cR) \geq (1+\varepsilon_2) F_\alpha(S)  \right\} \nonumber \\
    & = Pr\left\{ \sum_{c\in \cC} n_c\left( 1 - \alpha \sum_{n=1}^\infty \eta(n,\alpha) \prod_{i=0}^{n-1} \frac{\pi_c-i}{\theta_c-i} \right) \geq (1+\varepsilon_2) \sum_{c\in\cC} n_c\left( 1 - \alpha \sum_{n=1}^\infty \eta(n,\alpha)\big(1-\ut_c(S) \big)^n  \right) \right\} \nonumber \\
    & \leq 1 - \prod_{c\in\cC}\bigg( 1 - Pr\left\{ 1 - \alpha \sum_{n=1}^\infty \eta(n,\alpha) \prod_{i=0}^{n-1} \frac{\pi_c-i}{\theta_c-i} \geq (1+\varepsilon_2)\left( 1 - \alpha \sum_{n=1}^\infty \eta(n,\alpha)\big(1-\ut_c(S) \big)^n  \right) \right\}\bigg) \label{eq:fbad-all}
\end{align}

For each community $c$, let $\varepsilon_2' = \frac{1 - \alpha \sum_{n=1}^\infty \eta(n,\alpha)\big(1-\ut_c(S) \big)^n}{\alpha\sum_{n=1}^\infty \eta(n,\alpha)\big(1-\ut_c(S) \big)^n} \cdot \varepsilon_2$, thus $\varepsilon_2' \geq \varepsilon_2$ when $\alpha \leq 1/2$, and
\begin{align}
    & Pr\left\{ 1 - \alpha \sum_{n=1}^\infty \eta(n,\alpha)\prod_{i=0}^{n-1} \frac{\pi_c-i}{\theta_c-i} \geq (1+\varepsilon_2)\left( 1 - \alpha \sum_{n=1}^\infty \eta(n,\alpha)\big(1-\ut_c(S) \big)^n  \right) \right\}\nonumber \\
    & = Pr\left\{ 1- \alpha\sum_{n=1}^\infty \eta(n,\alpha) \prod_{i=0}^{n-1}\frac{\pi_c-i}{\theta_c-i} \geq 1 - \alpha \sum_{n=1}^\infty \eta(n,\alpha)\big(1-\ut_c(S) \big)^n + \varepsilon_2\left( 1 - \alpha \sum_{n=1}^\infty \eta(n,\alpha)\big(1-\ut_c(S) \big)^n  \right) \right\} \nonumber \\
    & = Pr\left\{  \alpha\sum_{n=1}^\infty \eta(n,\alpha) \prod_{i=0}^{n-1}\frac{\pi_c-i}{\theta_c-i} \leq  \alpha \sum_{n=1}^\infty \eta(n,\alpha)\big(1-\ut_c(S) \big)^n - \varepsilon_2\left( 1 - \alpha \sum_{n=1}^\infty \eta(n,\alpha)\big(1-\ut_c(S) \big)^n  \right) \right\} \nonumber \\
    & = Pr\left\{  \sum_{n=1}^\infty \eta(n,\alpha) \prod_{i=0}^{n-1}\frac{\pi_c-i}{\theta_c-i} \leq  (1-\varepsilon_2') \sum_{n=1}^\infty \eta(n,\alpha)\big(1-\ut_c(S) \big)^n \right\} \label{eq:fbad-c}
\end{align}

To limit Eq.~\ref{eq:fbad-c} to the first $Q$ ($Q\geq 2$) terms,
let $y=\frac{(1-\ut_c(S) )^{Q+1}}{(Q+1)\ut_c(S)}$,
$x=\frac{y\theta_c^2}{\theta_c-\pi_c+y\theta_c}$,
Eq.~\ref{eq:fbad-c} becomes
\begin{align}
    & Pr\left\{  \sum_{n=1}^Q \eta(n,\alpha) \prod_{i=0}^{n-1}\frac{\pi_c-i}{\theta_c-i} \leq  (1-\varepsilon_2') \sum_{n=1}^\infty \eta(n,\alpha)\big(1-\ut_c(S) \big)^n \right\} \nonumber\\
    & = Pr \left\{ \frac{\pi_c}{\theta_c} - (1-\varepsilon_2')\sum_{n=Q+1}^\infty \eta(n,\alpha)\big(1-\ut_c(S) \big)^n + \sum_{n=2}^Q \eta(n,\alpha) \prod_{i=0}^{n-1}\frac{\pi_c-i}{\theta_c-i} \leq  (1-\varepsilon_2') \sum_{n=1}^Q \eta(n,\alpha)\big(1-\ut_c(S) \big)^n   \right\} \nonumber \\
    & \leq Pr \left\{ \frac{\pi_c}{\theta_c} - y + \sum_{n=2}^Q \eta(n,\alpha) \prod_{i=0}^{n-1}\frac{\pi_c-i}{\theta_c-i} \leq  (1-\varepsilon_2') \sum_{n=1}^Q \eta(n,\alpha)\big(1-\ut_c(S) \big)^n   \right\} \nonumber \\
    & \leq Pr \left\{ \frac{\pi_c-x}{\theta_c-x} + \sum_{n=2}^Q \eta(n,\alpha) \prod_{i=0}^{n-1}\frac{\pi_c-i}{\theta_c-i} \leq  (1-\varepsilon_2') \sum_{n=1}^Q \eta(n,\alpha)\big(1-\ut_c(S) \big)^n  \right\} \nonumber \\
    & \leq Pr \left\{ \sum_{n=1}^Q \eta(n,\alpha) (\frac{\pi_c-Q+1}{\theta_c-Q+1})^n \leq  (1-\varepsilon_2') \sum_{n=1}^Q \eta(n,\alpha)\big(1-\ut_c(S) \big)^n  \right\} (\text{when } x\leq Q-1) \nonumber \\
    & \leq 1 - Pr\left\{ (\frac{\pi_c-Q+1}{\theta_c-Q+1})^Q > (1-\varepsilon_2') \big(1-\ut_c(S) \big)^Q \right\} \nonumber \\
    & = Pr\left\{ (\frac{\pi_c-Q+1}{\theta_c-Q+1})^Q \leq (1-\varepsilon_2') \big(1-\ut_c(S) \big)^Q \right\} \nonumber \\
    & \left(\text{Let } 1-\varepsilon_0 = \sqrt[Q]{1-\varepsilon_2'} \right) \nonumber\\
    & = Pr\left\{ \frac{\pi_c-Q+1}{\theta_c-Q+1} \leq (1-\varepsilon_0) \big(1-\ut_c(S) \big) \right\} \label{eq:fbad-Q}
\end{align}

Let $\varepsilon_0' = \varepsilon_0 + \frac{\theta_c}{\pi_c}\frac{\pi_c-Q+1}{\theta_c-Q+1} -1$, $\varepsilon_0 \geq \varepsilon_0' \geq 1- \frac{\theta_c}{\pi_c}\frac{\pi_c-Q}{\theta_c-Q}$, Eq.~\ref{eq:fbad-Q} becomes
\begin{align}
    & Pr\left\{ \frac{\pi_c-Q+1}{\theta_c-Q+1} \leq (1-\varepsilon_0) \big(1-\ut_c(S) \big) \right\} \nonumber \\
    & = Pr\left\{ \frac{\pi_c}{\theta_c} \leq (1-\frac{\pi_c}{\theta_c}\frac{\theta_c-Q+1}{\pi_c-Q+1} \varepsilon_0') \big(1-\ut_c(S) \big) \right\} \nonumber \\
    & \leq exp\left( -\frac{(\frac{\pi_c}{\theta_c}\frac{\theta_c-Q+1}{\pi_c-Q+1} \varepsilon_0' )^2}{2} \theta_c \big(1-\ut_c(S) \big)  \right) \nonumber \\
    & \leq exp\left( -\frac{ \varepsilon_0'^2}{2} \theta_c \big(1-\ut_c(S) \big)  \right) \nonumber \\
    & \bigg(\text{Let } \theta_c \geq  \frac{8Q^2\ln(C\binom{n_G}{k}/\delta_2)}{\varepsilon_2^2 \big(1-\ut_c(S) \big)} \geq \frac{8\ln(C\binom{n_G}{k}/\delta_2)}{\big( 1-\sqrt[Q]{1-\varepsilon_2'} \big)^2 \big(1-\ut_c(S) \big)} \geq \frac{2\ln(C\binom{n_G}{k}/\delta_2)}{\varepsilon_0'^2 \big(1-\ut_c(S) \big)} \bigg) \nonumber \\
    & \leq \delta_2/C\binom{n_G}{k} \label{eq:fbad-Q2}
\end{align}

Therefore,
\begin{align}
    Eq.(\ref{eq:fbad-all}) \leq 1 - \prod_{c\in \cC}(1 - \delta_2/C\binom{n_G}{k} ) \leq \delta_2/\binom{n_G}{k}
\end{align}
It indicates $Pr\left\{ \hat F_\alpha(S, \cR) \geq (1-1/e)(1-\varepsilon_1)\cdot OPT_F \right\} \leq \delta_2/\binom{n_G}{k}$, thus concludes the proof.

\end{proof}


\section{Additional Experiments}

\subsection{Running Time}

The number of RR sets is mainly determined by both the size of the network and the number of communities. 
In the following, we exhibit the runtime of our algorithm with the scale of networks. 
Note that our algorithm is currently implemented in Matlab 2022a, thus it costs more time to generate RR sets (generating RR sets in C++ could be at least 100 times faster). 
\textbf{RRsets} refers to the time (seconds) used to generate RR sets, \textbf{IMM} and \textbf{FIMM} denote the time used to select seeds based on the generated RR sets for IMM and FIMM, respectively.

\begin{table}[!h]
    \caption{Running time (seconds).}
    \centering
    \begin{tabular}{rrrrrr}
    \toprule
    Network & $n_G$ & $C$ & RRsets & IMM & FIMM \\
    \midrule
    Email & 1,005 & 42 & 14.281 & 0.011 & 0.020 \\
    Amazon & 12,698 & 509 & 82.204 & 0.028 & 0.283 \\
    Youtube & 30,696 & 1,157 & 162.533 & 0.052 & 3.592 \\
    DBLP & 72,875 & 1,352 & 735.755 & 0.063 & 18.259 \\
    \bottomrule
    \end{tabular}
    \label{tab:runtime}
\end{table}

Note that the proposed algorithm is more suitable and could be much more efficient under scenarios where the number of communities is rather limited ({\em e.g.}, a constant). 

\subsection{Comparison with Equality-based methods}

Our proposed algorithm is based on the notion of welfare fairness, which is in favor of fair result-aware seeding.
In this subsection, we aim to explore the difference between such results and community-aware seeding, which is based on the notion of equality. 
The Equality asks to divide the budget $k$ proportionally to the cluster sizes, {\em i.e.}, $|S\cap V_c| \approx k \cdot n_c/n_G$.
We adopt two strategies to select seeds under Equality: picking nodes with the highest degree per community and running IMM inside each community.

The dataset tested is the Email network (since the number of communities of other datasets is more than 50 and $k$ is set to 50).
The experiment settings are the same as the setting for Email in our Section~\ref{sec:exp} ($\alpha=0.5$ and $p$ ranges from 0.001 to 0.01 with the step of 0.001). 
The Table~\ref{tab:comparison} below exhibits the results, where $S_{ours}$, $S_\text{C-HD}$ and $S_\text{C-IMM}$ refer to seeds selected by our method, seeds selected by community-aware highest degree, and community-aware IMM, respectively. 
$PoF$ and $EoF$ are calculated w.r.t IMM in the classic influence maximization problem.


\begin{table}[!h]
    \caption{Comparison with equality-based methods}
    \centering
    \begin{tabular}{crrrrrr}
    \toprule
        \multirow{2}{*}{$p$} & \multicolumn{3}{c}{$PoF$} & \multicolumn{3}{c}{$EoF$}\\
        \cmidrule(r){2-4}
        \cmidrule(r){5-7}
         & $S_{ours}$ & $S_\text{C-HD}$& $S_\text{C-IMM}$ & $S_{ours}$ & $S_\text{C-HD}$ & $S_\text{C-IMM}$ \\
    \midrule
        0.001 & 21.77\% & 31.99\% & 21.56\% & 51.91\% & 45.08\% & 45.77\% \\
        0.002 & 16.92\% & 31.50\% & 20.56\% & 42.68\% & 37.31\% & 38.91\% \\
        0.003 & 12.11\% & 30.73\% & 20.76\% & 37.44\% & 31.76\% & 34.23\% \\
        0.004 & 10.08\% & 29.63\% & 19.11\% & 28.10\% & 20.97\% & 25.94\% \\
        0.005 & 9.22\% & 28.85\% & 17.63\% & 26.23\% & 16.49\% & 23.63\% \\
        0.006 & 6.31\% & 28.25\% & 17.96\% & 22.54\% & 3.82\% & 18.06\% \\
        0.007 & 5.48\% & 27.06\% & 16.21\% & 19.25\% & N/A & 13.57\% \\
        0.008 & 4.49\% & 26.03\% & 15.77\% & 17.11\% & N/A & 7.99\% \\
        0.009 & 3.70\% & 24.65\% & 14.88\% & 13.89\% & N/A & N/A \\
        0.01 & 2.57\% & 24.07\% & 15.64\% & 12.37\% & N/A & N/A \\
    \bottomrule
    \end{tabular}
    \label{tab:comparison}
\end{table}

As can be seen from the results, both $PoF$ (the lower the better) and $RoF$ of $S_{ours}$ (the higher the better) are always better than that of both $S_\text{C-HD}$ and $S_\text{C-IMM}$. 
In other words, $S_\text{C-HD}$ pays more price of fairness yet achieves a lower degree of fairness.
Compared with $S_\text{C-HD}$, $S_\text{C-IMM}$ yields a better performance,  
Moreover, as $p$ increases, the fair influence of both $S_\text{C-HD}$ and $S_\text{C-IMM}$ is even lower (leading to N/A of $EoF$) than that of IMM, which does not even contribute to fairness at all. 
The reason is that community-aware seeding highlights fairness in the process of seed allocation but not selection, while welfare fairness (also, MMF and DC) highlights fairness in the spreading results. 
The former could have an explicit fair distribution in seeding, but may still lead to unfair results.

\end{document}